\newtheorem{df}{Definition}
\newcommand{\qed}{$\Box$}
\newtheorem{theorem}{Theorem}
\newtheorem{lemma}{Lemma}
\newtheorem{cor}{Corollary}
\newenvironment{proof}{\par\noindent{\sf Proof.}}{\hfill\qed\par}
\let\realbfseries=\bfseries
\def\bfseries{\realbfseries\boldmath}
\let\real@titlestyle=\@titlestyle
\def\@titlestyle{\real@titlestyle\boldmath}
\def\defn#1{\textbf{\textit{\boldmath #1}}}
\let\emph=\defn
\title{\boldmath Any Regular Polyhedron Can Transform to Another by $O(1)$ Refoldings}
\author{
  \begin{tabular}{c@{\qquad}c@{\qquad}c}
  Erik D.~Demaine\thanks{CSAIL, MIT, USA. {\tt \{edemaine,mdemaine,diomidov\}@mit.edu}}
  &
  Martin L.~Demaine\footnotemark[1]
  &
  Yevhenii Diomidov\footnotemark[1]
  \\[\medskipamount]
  Tonan Kamata\thanks{School of Information and Science, Japan
    Advanced Institute of Science and Technology, Japan. {\tt \{kamata,uehara\}@jaist.ac.jp}}
  &
  Ryuhei Uehara\footnotemark[2]
  &
  Hanyu Alice Zhang\thanks{School of Applied and Engineering Physics, Cornell University, USA. {\tt hz496@cornell.edu}}
  \end{tabular}
}
\begin{document}
\thispagestyle{empty}
\maketitle

\begin{abstract}
We show that several classes of polyhedra are joined by a sequence of $O(1)$
refolding steps, where each refolding step unfolds the current polyhedron
(allowing cuts anywhere on the surface and allowing overlap)
and folds that unfolding into exactly the next polyhedron;
in other words, a polyhedron is refoldable into another polyhedron
if they share a common unfolding.
Specifically, assuming equal surface area, we prove that
(1)~any two tetramonohedra are refoldable to each other,
(2)~any doubly covered triangle is refoldable to a tetramonohedron,
(3)~any (augmented) regular prismatoid and doubly covered regular polygon is refoldable to a tetramonohedron,
(4)~any tetrahedron has a 3-step refolding sequence to a tetramonohedron, and
(5)~the regular dodecahedron has a 4-step refolding sequence to a tetramonohedron.
In particular, we obtain a $\leq 6$-step refolding sequence between
any pair of Platonic solids, applying (5) for the dodecahedron and
(1) and/or (2) for all other Platonic solids.
As far as the authors know, this is the first result about common unfolding
involving the regular dodecahedron.
\end{abstract}

\section{Introduction}
A polyhedron $Q$ is \emph{refoldable} to a polyhedron $Q'$ if $Q$ can be unfolded to a planar shape that folds into exactly the surface of~$Q'$,
i.e., $Q$ and $Q'$ share a common unfolding/development,
allowing cuts anywhere on the surfaces of $Q$ and $Q'$.
(Although it is probably not necessary for our refoldings, we also allow the common unfolding to self-overlap, as in \cite{NonCommon}.)
The idea of refolding was proposed independently by
M.~Demaine, F.~Hurtado, and E.~Pegg \cite[Open Problem 25.6]{GFA},
who specifically asked whether every regular polyhedron (Platonic solid) can be refolded into any other regular polyhedron.
In this context, there exist some specific results:
Araki et al.~\cite{JZCommon} found two Johnson-Zalgaller solids that are foldable to regular tetrahedra \cite{JZCommon},
and Shirakawa et al.~\cite{SHU2015} found an infinite sequence of polygons that can each fold into a cube and an approaching-regular tetrahedron.

More broadly, Demaine et al.~\cite{Refold} showed that any convex polyhedron can always be refolded to at least one other convex polyhedron.
Xu et al.~\cite{XHSU2017} and Biswas and Demaine~\cite{PrismCommon}
found common unfoldings of more than two (specific) polyhedra.
On the negative side, Horiyama and Uehara~\cite{NonCommon} proved impossibility
of certain refoldings when the common unfolding is restricted to cut along
the edges of polyhedra.

In this paper, we consider the connectivity of polyhedra by the transitive closure of refolding, an idea suggested by Demaine and O'Rourke \cite[Section 25.8.3]{GFA}.
Define a \emph{($k$-step) refolding sequence} from $Q$ to $Q'$ to be
a sequence of convex polyhedra $Q = Q_0, Q_1, \ldots, Q_k = Q'$
where each $Q_{i-1}$ is refoldable to $Q_i$.
We refer to $k$ as the \emph{length} of the refolding sequence.
To avoid confusion, we use ``1-step refoldable'' to refer to the
previous notion of refoldability.

\paragraph{Our results.}

Do all pairs of convex polyhedra of the same surface area
(a trivial necessary condition)
have a finite-step refolding sequence?  If so, how short of a sequence suffices?
As mentioned in \cite[Section 25.8.3]{GFA}, the regular polyhedron open problem
mentioned above is equivalent to asking whether 1-step refolding sequences
exist for all pairs of regular polyhedra.
We solve a closely related problem, replacing ``$1$'' with ``$O(1)$'':
for any pair of regular polyhedra $Q$ and $Q'$,
we give a refolding sequence of length at most~$6$.

More generally, we give a series of results about $O(1)$-step refolding
certain pairs of polyhedra of the same surface area:
\begin{enumerate}
\item In Section~\ref{sec:tetramonohedra},
we show that any two tetramonohedra are 1-step refoldable to each other,
where a \emph{tetramonohedron} is a tetrahedron that consists of four congruent acute triangles.

  This result offers a possible ``canonical form'' for finite-step refolding sequences between any two polyhedra: because a refolding from $Q$ to $Q'$ is also a refolding from $Q'$ to~$Q$, it suffices to show that any polyhedron has a finite-step refolding into some tetramonohedron.
\item In Section~\ref{sec:prismoid}, we show that every regular prismatoid and every augmented regular prismatoid are 1-step refoldable to a tetramonohedron.

  In particular, the regular tetrahedron is a tetramonohedron, the regular hexahedron (cube) is a regular prismatoid,
  and the regular octahedron and regular icosahedron are both augmented regular prismatoids.
  Therefore, the regular tetrahedron has a 2-step refolding sequence to
  the regular hexahedron, octahedron, and icosahedron
  (via an intermediate tetramonohedron); and
  every pair of polyhedra among the regular hexahedron, octahedron,
  and icosahedron have a 3-step refolding sequence
  (via two intermediate tetramonohedra).
\item In Section~\ref{sec:dodecahedron}, we prove that a regular dodecahedron is refoldable to a tetramonohedron by a 4-step refolding sequence.

  As far as the authors know, there are no previous explicit refolding results for the regular dodecahedron, except the general results of \cite{Refold}.

  Combining the results above, any pair of regular polyhedra (Platonic solids)
  have a refolding sequence of length at most 6.
\item In addition, we prove that every doubly covered triangle (Section~\ref{sec:tetramonohedra}) and every doubly covered regular polygon (Section~\ref{sec:prismoid}) are refoldable to a tetramonohedron, and that every tetrahedron has a 3-step refolding sequence to a tetramonohedron (Section~\ref{sec:tetrahedron}).

  Therefore, every pair of polyhedra among the list above have an $O(1)$-step refolding sequence.
\end{enumerate}

\section{Preliminaries}	

For a polyhedron $Q$, $V(Q)$ denotes the set of vertices of $Q$.
For $v \in V(Q)$, define the \emph{cocurvature} $\sigma(v)$ of $v$ on $Q$ to be
the sum of the angles incident to $v$ on the facets of~$Q$.
The \emph{curvature} $\kappa(v)$ of $v$ is defined by
$\kappa(v) = 2\pi-\sigma(v)$.
In particular, if $\kappa(v)=\sigma(v)=\pi$, we call $v$ a \emph{smooth vertex}.
We define \emph{$\Pi_k$} to be the class of polyhedra $Q$ with exactly $k$ smooth vertices.
It is well-known that the total curvature of the vertices of any convex polyhedron is $4\pi$,
by the Gauss--Bonnet Theorem (see \cite[Section 21.3]{GFA}).
Thus the number of smooth vertices of a convex polyhedron is at most 4.
Therefore, the classes $\Pi_0, \Pi_1, \Pi_2, \Pi_3, \Pi_4$ give us a partition of all convex polyhedra.

An \emph{unfolding} of a polyhedron is a (possibly self-overlapping) planar polygon obtained by cutting and developing the surface of the polyhedron (allowing cuts anywhere on the surface).
\emph{Folding} a polygon $P$ is an operation to obtain a polyhedron $Q$ by choosing crease lines on $P$ and gluing the boundary of $P$ properly.
When the polyhedron $Q$ is convex, the following result is crucial:
\begin{lemma}[Alexandrov's Theorem \cite{Alexandrov,GFA}]
  If we fold a polygon $P$ in a way that satisfies the following three \emph{Alexandrov's conditions},
  then there is a unique convex polyhedron $Q$ realized by the folding.
  \begin{enumerate}
  \item Every point on the boundary of $P$ is used in the gluing.
  \item At any glued point, the summation of interior angles (cocurvature) is at most $2\pi$.
  \item The obtained surface is homeomorphic to a sphere.
  \end{enumerate}
\end{lemma}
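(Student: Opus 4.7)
The three Alexandrov conditions identify $P$, after gluing, with a closed piecewise-flat surface $S$ that is topologically a sphere and carries finitely many cone points of nonnegative curvature (cone angle at most $2\pi$) whose total curvature is $4\pi$ by Gauss--Bonnet. The theorem then amounts to realizing such an abstract $S$ isometrically as the boundary of some convex body $Q \subset \mathbb{R}^3$ (existence), and showing the realization is unique up to rigid motion (uniqueness); a doubly covered convex polygon is counted as an admissible degenerate~$Q$. I would organize the proof around these two parts.

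For existence, my plan is to follow Alexandrov's classical deformation (continuity) method. Fix the number $n$ of nontrivial cone points of $S$, and consider two moduli spaces: $M_n$, of isometry classes of abstract polyhedral spheres of this combinatorial type, and $P_n$, of convex polyhedra in $\mathbb{R}^3$ with at most $n$ vertices modulo rigid motions. A dimension count shows each is a manifold of the same dimension ($3n-6$), and the intrinsic-metric map $\Phi : P_n \to M_n$ is continuous and proper, since controlled degenerations of convex polyhedra yield controlled degenerations of metrics. Granted that $\Phi$ is locally injective, invariance of domain together with a mapping-degree argument along a path in the connected space $M_n$ joining $S$ to a base metric (say that of a regular tetrahedron) already realized by a convex polyhedron forces $\Phi$ to be surjective, producing the desired~$Q$.

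The main obstacle, which simultaneously supplies uniqueness, is the rigidity step: two convex polyhedra $Q, Q'$ with isometric intrinsic metrics must be congruent. My plan is to first refine the face decompositions of $Q$ and $Q'$ into a common combinatorial structure by subdividing each along a carefully chosen system of geodesics between the cone points, and then invoke Cauchy's rigidity theorem for combinatorially equivalent convex polyhedra with congruent corresponding faces. The heart of that argument is Cauchy's sign-change lemma: around any vertex, the sequence of dihedral angles that increase versus decrease when one polyhedron is continuously deformed to the other must have enough sign alternations to contradict an edge-counting estimate on the whole graph. Rather than reproduce this classical but delicate combinatorial argument, I would rely on the detailed treatments in Alexandrov's monograph \cite{Alexandrov} and in \cite{GFA}.
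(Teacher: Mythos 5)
The paper does not prove this statement at all: it is Alexandrov's classical theorem, imported as a black box with citations to \cite{Alexandrov} and \cite{GFA}, and the authors only ever use it as a criterion for checking that their gluings produce convex polyhedra. So there is no ``paper proof'' to compare against; what you have written is a sketch of the actual classical proof, and as such it is a faithful outline of Alexandrov's argument: the deformation (continuity) method via the map $\Phi : P_n \to M_n$ for existence, and a Cauchy-type rigidity theorem for uniqueness, with the dimension count $3n-6$ on both sides being correct. Three places where your sketch is thinner than the real argument deserve flagging. First, the local injectivity of $\Phi$ that you ``grant'' before invoking invariance of domain is not separate from the rigidity step --- it \emph{is} the rigidity statement in local form, so the logical order matters: rigidity must come first. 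Second, the reduction of uniqueness to Cauchy's theorem by passing to a common geodesic subdivision is genuinely delicate, because the refinement introduces vertices of zero curvature, and at a flat vertex the standard Cauchy sign-counting lemma fails (one can have exactly two sign changes there); Alexandrov proves a strengthened combinatorial lemma to handle this, and it cannot simply be cited as ``Cauchy's rigidity theorem for combinatorially equivalent polyhedra.'' Third, the connectedness of the moduli space $M_n$ of metrics, which your degree argument requires, is itself a nontrivial lemma. Since you explicitly delegate the delicate parts to \cite{Alexandrov} and \cite{GFA} --- exactly as the paper itself does --- this is an acceptable and accurate account of why the lemma is true, but it should be understood as a proof outline rather than a proof.
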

By this result, when we fold a polygon $P$ to a polyhedron $Q$,
it is enough to check that the gluing satisfies Alexandrov's conditions.
(In this paper, it is easy to check that the conditions are satisfied by our (re)foldings, so we omit their proof.)

A polyhedron $Q$ is \emph{(1-step) refoldable} to a polyhedron $Q'$ if $Q$ can be unfolded to a polygon that folds to $Q'$ (and thus they have the same surface area).
A \emph{($k$-step) refolding sequence} of a polyhedron $Q$ to a polyhedron $Q'$ is a sequence of convex polyhedra $Q=Q_0, Q_1, \ldots, Q_k=Q'$
where $Q_{i-1}$ is refoldable to $Q_i$ for each $i \in \{1, \dots, k\}$.
To simplify some arguments that $Q$ is refoldable to $Q'$,
we sometimes only partially unfold $Q$ (cutting less than needed to make the surface unfold flat), and refold to $Q'$ so that Alexandrov's conditions hold.

We introduce some key polyhedra.
A tetrahedron is a \emph{tetramonohedron} if its faces are four congruent acute triangles.\footnote{This notion is also called 
\emph{isosceles tetrahedron} or \emph{isotetrahedron} in some literature.}
We consider a \emph{doubly covered polygon} as a special polyhedron with two faces.
Precisely, for a given $n$-gon $P$, we make a mirror image $P'$ of $P$ and glue corresponding edges.
Then we obtain a \emph{doubly covered $n$-gon} which has $2$ faces, $n$ edges, and zero volume.


\section{Refoldabilty of Tetramonohedra and Doubly Covered Triangles}
\label{sec:tetramonohedra}

In this section, we first show that any pair of tetramonohedra can be refolded to each other.
We note that a doubly covered rectangle is a (degenerate) tetramonohedron,
by adding edges along two crossing diagonals (one on the front side and
one on the back side).
It is known that a polyhedron is a tetramonohedron if and only if it is in $\Pi_4$ (see, e.g., \cite[p.~97]{AC1979}). 
In other words, $\Pi_4$ is the set of tetramonohedra.
\begin{theorem}
  \label{thm:tetramono}
  For any $Q, Q' \in \Pi_4$, $Q$ is 1-step refoldable to~$Q'$.
\end{theorem}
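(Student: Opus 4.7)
The plan is to exhibit a single parallelogram that serves as a common unfolding of $Q$ and $Q'$, exploiting the rich family of parallelogram unfoldings that every tetramonohedron admits.

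\textbf{Step 1: Every tetramonohedron admits a parallelogram unfolding.} Given a tetramonohedron $Q$ with face triangle $T$, I would start by placing $T$ in the plane and attaching three additional copies of $T$ via $180^\circ$ rotations about the midpoints of appropriate edges. This produces a parallelogram $P_{Q}$ of area $4\cdot\text{area}(T)$, tiled by four copies of $T$ in a p2 pattern. Conversely, folding $P_Q$ along the two midline segments connecting the midpoints of opposite sides satisfies the three conditions of Alexandrov's theorem (every boundary point is glued, each glued-vertex cocurvature is at most $2\pi$ since the three angles of $T$ already sum to $\pi$, and the glued surface is topologically a sphere), so it yields a convex polyhedron which must be exactly~$Q$.

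\textbf{Step 2: A rich family of parallelogram unfoldings.} I would then argue that $Q$ has many parallelogram unfoldings, not just the canonical one. Taking the full p2 tiling of the plane by $T$ (generated by $180^\circ$ rotations about the midpoints of $T$'s sides), any fundamental domain of an index-$2$ sublattice of its translation lattice is a parallelogram of the correct area that unfolds~$Q$. By choosing different such sublattices, and by using the freedom to translate or rotate the base triangle within the tiling, one obtains a family of parallelogram unfoldings, all of area $4\cdot\text{area}(T)$, covering a range of side lengths and angles.

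\textbf{Step 3: Matching two tetramonohedra.} Given $Q \in \Pi_4$ with face $T$ and $Q'\in\Pi_4$ with face $T'$ of equal area, I plan to construct a single parallelogram $P$ lying in both families. Concretely, I would look for a parallelogram $P$ whose two side vectors simultaneously form bases of appropriate sublattices of the two p2 translation lattices (for $T$ and for $T'$), so that $P$ is tiled by $4$ copies of $T$ in a $Q$-compatible pattern and by $4$ copies of $T'$ in a $Q'$-compatible pattern. Folding $P$ via the first tiling then yields $Q$, and folding via the second tiling yields $Q'$, both verified by Alexandrov's theorem.

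\textbf{Main obstacle.} The heart of the argument is Step 3: producing a single parallelogram that is simultaneously a valid unfolding for both $Q$ and $Q'$, given only that $\text{area}(T)=\text{area}(T')$. The two p2 translation lattices are generally incommensurable, so an exact ``common fundamental domain'' need not exist; the resolution will rely on the extra flexibility of Step 2 (choosing among the various sublattice-fundamental parallelograms, and possibly non-edge-to-edge dissections into copies of $T$ and $T'$ that still satisfy Alexandrov's conditions). A clean way to finish would be an explicit parameterization of the two families by their side lengths or an angle parameter, together with a continuity or intermediate-value argument showing the families must meet inside the space of parallelograms of the common area.
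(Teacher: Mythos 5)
Your Steps 1--2 are fine in spirit (every tetramonohedron does admit parallelogram unfoldings), but the proof stands or falls on Step~3, and Step~3 is not carried out: you explicitly flag the construction of a single parallelogram compatible with both $Q$ and $Q'$ as the ``main obstacle'' and offer only a hoped-for continuity/intermediate-value argument. As you yourself observe, the two p2 translation lattices are generically incommensurable, so the specific plan of finding a common fundamental parallelogram of sublattices of both tilings cannot work in general; the theorem therefore remains unproved by what you have written. The difficulty is that you are demanding too much --- that the common parallelogram be tiled by four copies of $T$ \emph{and} by four copies of $T'$ in lattice-compatible patterns --- when the second folding's crease lines need not respect the first triangle's tiling at all.

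The paper's proof resolves exactly this point with a ``rolling belt'' (cylinder) argument that you may find instructive. Let $a$ be the longest edge of $Q$'s face $T$ and $b$ the corresponding height, and define $a',b'$ likewise for $Q'$, with $a>a'$ WLOG. Cutting the two opposite edges of $Q$ of length $a$ turns its surface into a flat cylinder of circumference $2a$ and height $b$. Since $a'$ is the longest edge of $T'$ we have $a'>b'$, and $a'b'=ab$ gives $(a')^2>ab\ge a'b$, hence $2a'>a'>b$; so the cylinder can be cut open along a transversal geodesic of length $2a'$, yielding a parallelogram with opposite side pairs of lengths $2a$ and $2a'$. Regluing the \emph{other} pair of sides produces a flat cylinder of circumference $2a'$ and height $ab/a'=b'$, which zips up into $Q'$. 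In other words, the common parallelogram is found not by matching tilings but by arranging that one pair of its sides closes up into $Q$'s belt and the other pair into $Q'$'s belt, with the equal-area hypothesis forcing the heights to match automatically. If you want to salvage your approach, this is the missing idea: drop the double-tiling requirement and let the parallelogram's two side lengths be dictated by the two belts.
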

\begin{proof}
Let $T$ be any triangular face of $Q$. Let $a$ be the length of the longest edge of $T$ and $b$ the height of $T$ for the base edge of length $a$.
We define $T'$, $a'$, and $b'$ in the same manner for $Q'$; refer to \figurename~\ref{fig:Pi_4}.
We assume $a>a'$ without loss of generality.
Now we have $a'>b'$ because $a'$ is the longest edge of $T'$,
and $a'b'=ab$ because $T$ and $T'$ are of the same area.
Thus, $(a')^2 = a' b' \frac{a'}{b'} > a' b' = ab$,
and $2a'>a'>b$ by $a>a'$.

\begin{figure}
  \centering
  \includegraphics[width=8cm]{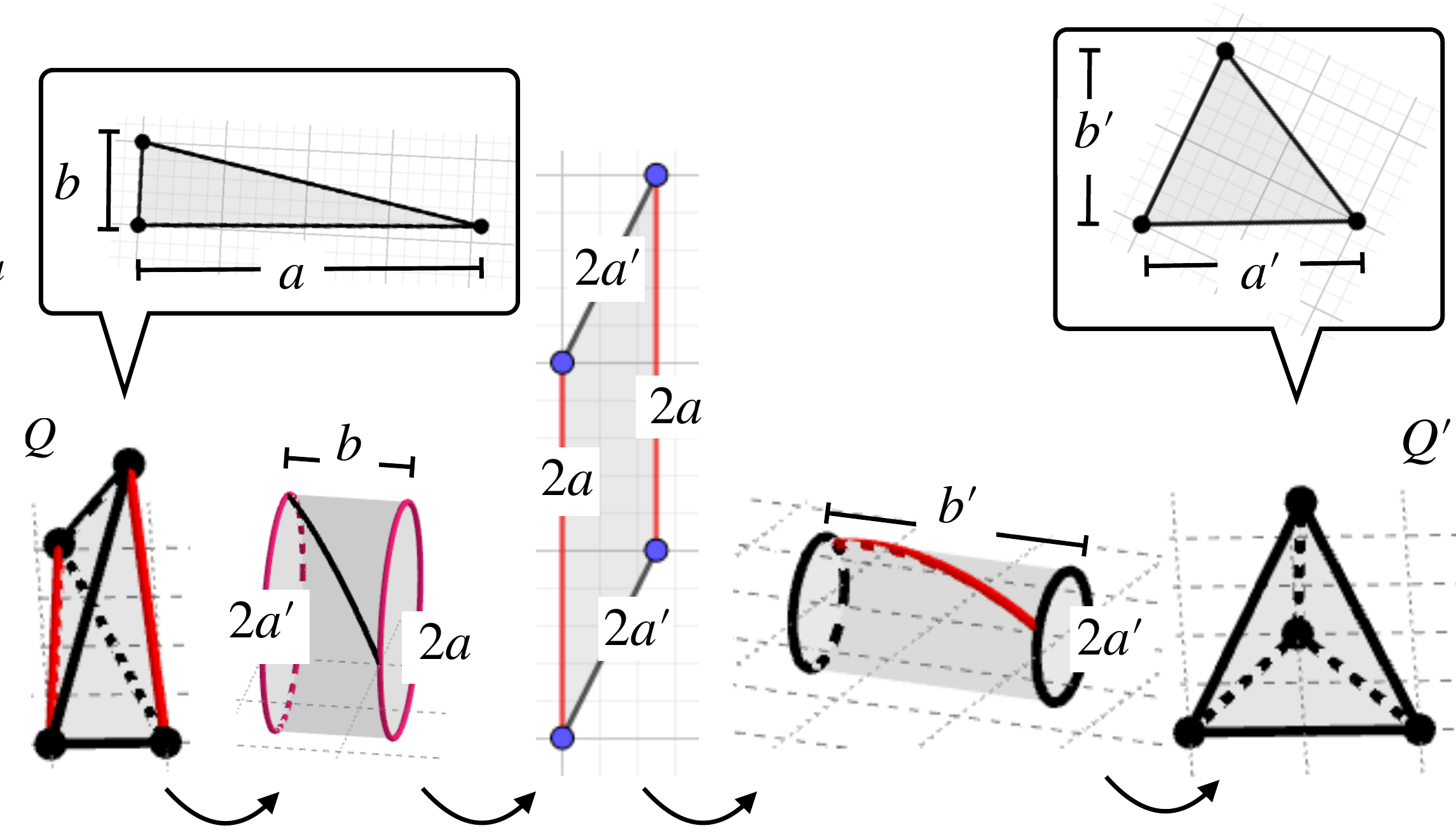}
  \caption{A refolding between two tetramonohedra}
  \label{fig:Pi_4}
\end{figure}

We cut two edges of $Q$ of length $a$, resulting in a cylinder of height $b$ and circumference $2a$.
Then we can cut the cylinder by a segment of length $2a'$ because $2a'>b$.
The resulting polygon is a parallelogram such that two opposite sides have length $2a$ and the other two opposite sides have length $2a'$.
Now we glue the sides of length $2a$ and obtain a cylinder of height $b'$ and circumference $2a'$.
Then we can obtain $Q'$ by folding this cylinder suitably
(the opposite of cutting two edges of $Q'$ of length $2a'$).
%
\end{proof}


To complement the doubly covered rectangles handled by Theorem~\ref{thm:tetramono},
we give a related result for doubly covered triangles:
\begin{theorem}
\label{th:tri}
Any doubly covered triangle $Q$ is 1-step refoldable into a doubly covered rectangle.
Thus, $Q$ has a refolding sequence to any doubly covered triangle~$Q'$ of length at most 3.
If doubly covered triangles $Q$ and $Q'$ share at least one edge length, then
the sequence has length at most 2.
\end{theorem}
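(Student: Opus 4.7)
\medskip
\noindent\textbf{Proof plan.} My plan is to prove the main claim---that any doubly covered triangle $Q$ is 1-step refoldable to a doubly covered rectangle---by exhibiting an explicit common unfolding, and then deduce both refolding-sequence bounds from it.

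For the main claim, I designate a side $BC$ of the underlying triangle $T$ with length $a$ and corresponding altitude $h$ from the opposite vertex $A$, so the total surface area of $Q$ is $ah$; the target will be the doubly covered $\frac{a}{2}\times h$ rectangle, which matches this area. I seek a polygon $P$ that serves as a common unfolding of $Q$ and this target. My approach adapts the classical two-piece triangle-to-rectangle dissection (cut $T$ along its midline parallel to $BC$ at height $\frac{h}{2}$ and rotate the top smaller triangle by $180^\circ$ about the midpoint of one of its sides, producing a parallelogram of dimensions $a \times \frac{h}{2}$) to the doubly-covered setting: I aim for a cut pattern on $Q$---the midline on one face together with appropriate portions of the edges $AB, BC, CA$---that yields a parallelogram-shaped polygon $P$ of area $ah$. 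Alexandrov's theorem then certifies both foldings of $P$: back to $Q$ (by the natural reversal of the cuts) and forward to the target doubly covered rectangle (by a cylinder-based argument analogous to the proof of Theorem~\ref{thm:tetramono}, now applied to a degenerate tetramonohedron whose faces are right triangles).

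Granted the main claim, the two refolding-sequence bounds follow immediately. For arbitrary doubly covered triangles $Q, Q'$ of equal surface area, refold each to some doubly covered rectangle---$R$ and $R'$, respectively. Since doubly covered rectangles are degenerate tetramonohedra (as noted before Theorem~\ref{thm:tetramono}) and $R, R'$ have equal surface area, Theorem~\ref{thm:tetramono} yields a 1-step refolding $R \to R'$, giving a length-$3$ sequence $Q \to R \to R' \to Q'$. If $Q$ and $Q'$ additionally share an edge length $a$, applying the main claim with this shared $a$ to both triangles produces the same target rectangle (shared $a$ and equal surface area force equal altitudes $h$), yielding a length-$2$ sequence $Q \to R \to Q'$.

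The principal difficulty is the construction of $P$. The naive kite-shaped unfolding of $Q$ (obtained by cutting two of the three edges of $T$ and unfolding along the third) does not in general fold to a doubly covered rectangle, and simple modifications such as adding only a midline cut on one face can still collapse back to a kite after flattening (because both pieces of the cut face may end up flipping over the same line $CA$). A more intricate cut pattern---possibly depending on whether $T$ is acute, right, or obtuse---is needed to produce a parallelogram $P$ whose dimensions are compatible with the target rectangle's geometry. A further subtlety is that Theorem~\ref{thm:tetramono}'s cylinder argument was originally stated for tetramonohedra with acute faces, so its invocation for the degenerate right-triangle case requires verifying that the cylinder-cutting step remains valid and that the resulting gluing continues to satisfy Alexandrov's conditions.
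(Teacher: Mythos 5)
Your reduction of the second and third claims to the first is correct and matches the paper: once $Q$ refolds to a doubly covered rectangle $R$, the length-$3$ bound follows from Theorem~\ref{thm:tetramono} applied to $R$ and $R'$ (both being degenerate tetramonohedra), and a shared edge length plus equal area forces $R=R'$, giving length~$2$. The problem is the first claim, which is the substance of the theorem: you never actually construct the common unfolding. Your proposal explicitly defers ``the principal difficulty'' --- producing the polygon $P$ and its two foldings --- to an unspecified ``more intricate cut pattern,'' possibly case-split by the shape of $T$, and you also flag but do not resolve the issue of applying the cylinder argument of Theorem~\ref{thm:tetramono} to a degenerate tetramonohedron with right-triangle faces (which falls outside the acute-face definition used there). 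As written, the main claim is a plan with an acknowledged hole, not a proof.

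Moreover, the unfolding you dismiss is exactly the one that works. The paper cuts two edges of $Q$ and unfolds along the third (the base $BC$, length $b$, height $h$), obtaining the kite $Q$-unfolding consisting of $T$ and its mirror $T'$. Taking $p,q$ to be the midpoints of the two cut edges of $T$ and $p',q'$ their mirrors in $T'$, the quadrilateral $pp'q'q$ is a $\tfrac{b}{2}\times h$ rectangle, and folding the kite along the grid of lines through $pq$, $p'q'$, $pp'$, $qq'$ wraps $T$ and $T'$ onto that rectangle, yielding the doubly covered $\tfrac{b}{2}\times h$ rectangle directly --- the two-face analogue of the classical fold of a triangle onto its medial rectangle, and it needs no acute/right/obtuse case analysis and no appeal to the cylinder argument. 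So your instinct that ``adding only a midline cut on one face can still collapse back to a kite'' led you away from the correct construction: no extra surface cuts are needed at all, only the right crease pattern on the kite. To repair your write-up, either carry out this grid folding explicitly or fully specify your parallelogram $P$ together with both gluings and verify Alexandrov's conditions for each.
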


\begin{figure}
  \centering
  \includegraphics[width=8cm]{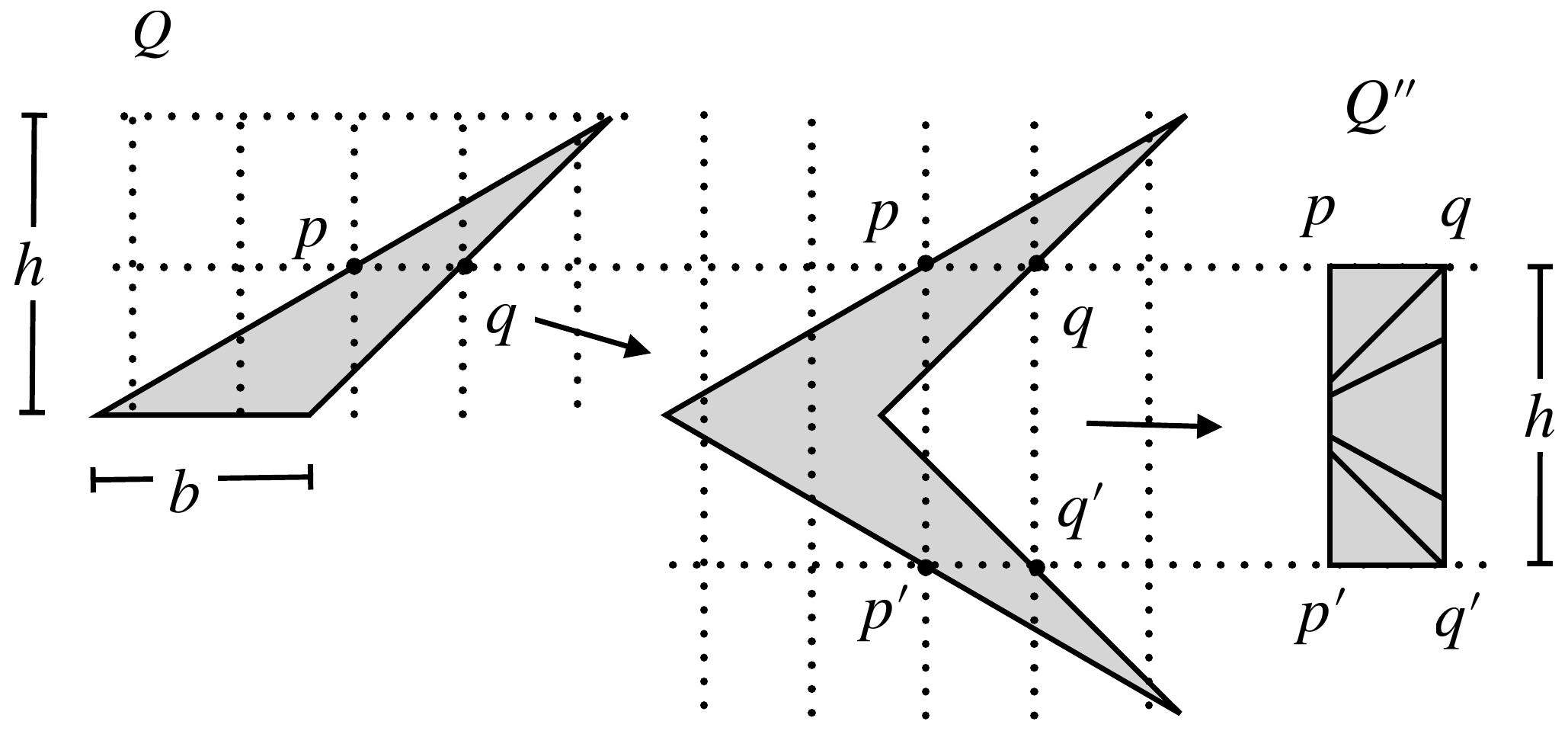}
  \caption{A refolding from a doubly covered triangle to a doubly covered rectangle}
  \label{fig:3to4}
\end{figure}

\begin{proof}
Let $Q$ consist of a triangle $T$ and its mirror image $T'$.
We first cut $Q$ along any two edges, and unfold along the remaining attached edge, resulting in a quadrilateral unfolding as shown in \figurename~\ref{fig:3to4}.
Let $b$ be the length of the uncut edge, which we call the \emph{base}, and let $h$ be the height of $T$ with respect to the base.
Let $p$ and $q$ be the midpoints of the two cut edges.
Then the line segment $pq$ is parallel to the base and of length $b/2$.
In the unfolding of $Q$, let $p'$ and $q'$ be the mirrors of $p$ and $q$, respectively.
Then we can draw a grid based on the rectangle $pp'q'q$ as shown in \figurename~\ref{fig:3to4}.
By folding along the crease lines defined by the grid,
we can obtain a doubly covered rectangle $Q''$ of size $b/2\times h$
(matching the doubled surface area of~$Q$).
(Intuitively, this folding wraps $T$ and $T'$ on the surface of the rectangle $pp'q'q$.)

Because $Q''$ is also a tetramonohedron, the second claim follows from Theorem~\ref{thm:tetramono}.
When $Q$ has an edge of the same length as an edge of $Q'$, as in the third claim, we can cut the other two edges of $Q$ and $Q'$ to obtain the same doubly covered rectangle, resulting in a 2-step refolding sequence.
\end{proof}

The technique in the proof of Theorem~\ref{th:tri} works
for any doubly covered triangle $Q$ even if its faces are acute or
obtuse triangles.

\section{Refoldability of a Regular Prismatoid to a Tetramonohedron}
\label{sec:prismoid}
In this section, we give a 1-step refolding of any regular prism or prismatoid
to a tetramonohedron.
We extend the approach of Horiyama and Uehara \cite{NonCommon}, who showed that the regular icosahedron, the regular octahedron, and the regular hexahedron (cube) can be 1-step refolded into a tetramonohedron.
As an example, \figurename~\ref{fig:commonunfold} shows their common unfolding for the regular icosahedron.

A polygon $P=(p_0, c_1, p_1, c_2, p_2, \ldots, p_{2n}, c_{2n}, p_{2n+1},p_0)$ is called a \emph{spine polygon}
if it satisfies the following two conditions (refer to \figurename~\ref{fig:spine}):
\begin{enumerate}
\item Vertex $p_i$ is on the line segment $p_0 p_n$ for each $0< i< n$;
  vertex $p_i$ is on the line segment $p_{n+1} p_{2n+1}$ for each $n+1 < i < 2n+1$;
  and the polygon $B=(p_0,p_n,p_{n+1},p_{2n+1},p_0)$  is a parallelogram.
  We call $B$ the \emph{base} of $P$, and require it to have positive area.
\item The polygon $T_i=(p_i, c_{i+1}, p_{i+1}, p_i)$ is an isosceles triangle for each $0\le i\le n-1$ and $n+1\le i\le 2n$.
  The triangles $T_0, T_1, \ldots, T_{n-1}$ are congruent, and $T_{n+1}, T_{n+2}, \ldots, T_{2n}$ are also congruent.
  These triangles are called \emph{spikes}.
\end{enumerate}
  
\begin{figure}
  \centering
  \includegraphics[width=8cm]{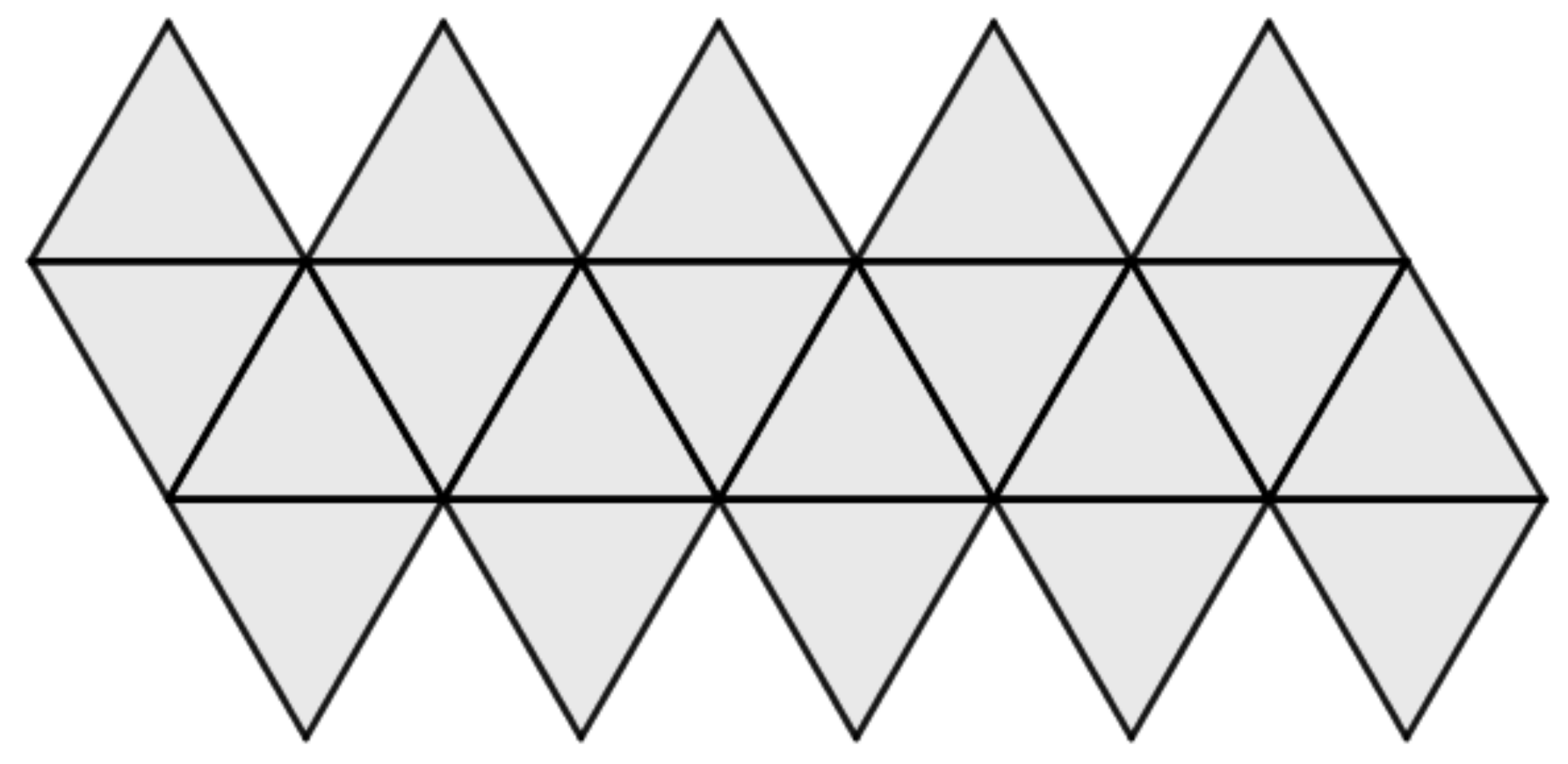}
  \caption{A common unfolding of a regular icosahedron and a tetramonohedron, from \cite{NonCommon}}
  \label{fig:commonunfold}
\end{figure}
\begin{figure}
  \centering
  \includegraphics[width=8cm]{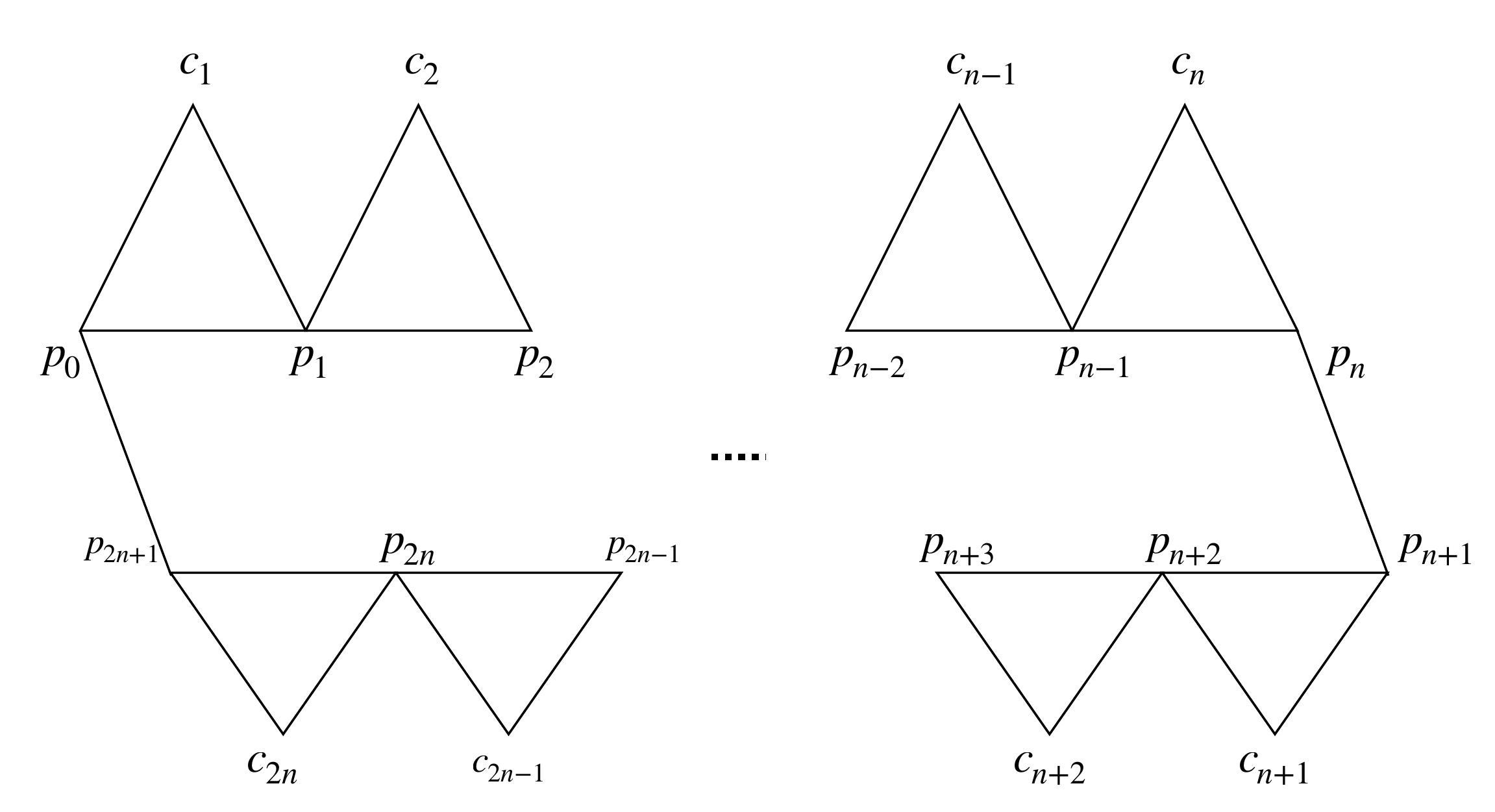}
  \caption{A spine polygon with $2n$ spikes}
  \label{fig:spine}
\end{figure}		

\begin{lemma}
\label{lem:spine}
  Any spine polygon $P$ can be folded to a tetramonohedron.
\end{lemma}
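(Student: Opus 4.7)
My plan is to construct an explicit gluing of the boundary of $P$ satisfying Alexandrov's conditions that yields a tetramonohedron, adapting the standard folding of a parallelogram into a tetramonohedron to handle the spikes. Recall that a parallelogram $ABCD$ with midpoints $M$, $N$ of opposite sides $AB$, $CD$ folds to a tetramonohedron via the gluing $AM \equiv BM$, $CN \equiv DN$, $AD \equiv BC$: the four smooth vertices are $M$, $N$, $\{A,B\}$, and $\{C,D\}$, each of cone angle $\pi$ (where the cone angle at $\{A,B\}$ is $\phi + (\pi - \phi) = \pi$ by the supplementary-angle property of the parallelogram).

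For the spine polygon I plan two stages. First, glue the two slant sides $p_0 p_{2n+1}$ and $p_n p_{n+1}$ of the base $B$ together with matching orientation (the analog of the $AD \equiv BC$ step). Second, for the top zigzag boundary, pair its left half against its right half via a reflection-like ``midline'' gluing. Because the top spikes are mutually congruent (and similarly for the bottom), the spike at position $i$ from the left has matching leg lengths with the spike at position $n-1-i$ from the right, so the paired gluing is geometrically valid; the middle of the zigzag -- either a valley $p_{n/2}$ (if $n$ is even) or an apex $c_{(n+1)/2}$ (if $n$ is odd) -- plays the role of $M$. An analogous gluing handles the bottom zigzag and produces $N$. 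Alexandrov's three conditions (boundary matching, cone-angle bound $\leq 2\pi$, spherical topology) should follow from the construction.

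The hardest part will be verifying that the resulting polyhedron has exactly four vertex classes of cone angle $\pi$ and $2n-2$ flat classes of cone angle $2\pi$, the distribution forced by the polygon-angle-sum of $4n\pi$ for $P$. Naive gluings -- folding each spike in half, or collapsing all top apexes into one cone point -- yield cone angles like $\theta$, $n\theta$, or $2\pi - \theta$, which match $\pi$ or $2\pi$ only for very special parameter values. The correct gluing must exploit both the supplementary-angle property of the parallelogram (so parallelogram-corner angles pair as $\phi + (\pi-\phi) = \pi$) and the congruence of the spikes (so apex angles $\theta$ and interior-$p$ angles $2\pi - \theta$ combine cleanly to $2\pi$). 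I expect the verification will require separate case analyses for even versus odd $n$ and careful accounting of each spike-apex identification, together with a cleanup argument showing that the two midline points genuinely acquire cone angle $\pi$ when the surrounding spike structure is glued down symmetrically.
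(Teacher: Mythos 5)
Your high-level plan---glue the two slant sides of the base to each other, then zip each of the top and bottom zigzags onto itself---is the right shape of argument, but the specific zip you describe is exactly the ``naive gluing'' your own closing paragraph warns against. Folding a zigzag about its \emph{central vertex} ($p_{n/2}$ or $c_{(n+1)/2}$) and pairing spike $i$ with spike $n-1-i$ identifies apexes with apexes and valleys with valleys. Writing $\theta$ for the common apex angle, each glued apex pair then has cone angle $2\theta$, each glued valley pair has cone angle $2(2\pi-\theta)=4\pi-2\theta>2\pi$ (violating Alexandrov's second condition outright), the central fold vertex keeps cone angle $\theta$ or $2\pi-\theta$ rather than $\pi$, and the corner class $\{p_0,p_n\}$ gets $\pi+2\beta$ (the two parallelogram angles summing to $\pi$, plus one spike base angle $\beta$ from each end), not $\pi$. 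So the construction as stated does not yield a convex polyhedron at all. You correctly identify the local pairing that must occur---an apex angle $\theta$ against a valley angle $2\pi-\theta$---but the reflection-about-a-vertex gluing you propose never achieves it. The missing idea is a half-leg offset: the zip must be centered at the \emph{midpoint of a leg} of the zigzag, not at a vertex, so that every apex is matched with a valley.

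That offset is precisely what the paper's decomposition encodes. It splits the top boundary into $l_1=(p_0,c_1,\ldots,c_n)$, a path of $2n-1$ equal legs that is centrally symmetric (invariant under rotation by $\pi$ about the midpoint of its middle leg), and the single leg $l_2=(c_n,p_n)$, and likewise on the bottom; the remaining two pieces are the parallel, congruent slant sides of the base. A theorem of Akiyama and Matsunaga (a folding version of the Conway criterion) states that a polygon whose boundary splits into two parallel parts and four centrally symmetric parts folds to a tetramonohedron: the parallel sides are glued by translation, each centrally symmetric piece is zipped about its own center, the four centers become the four cone points of angle $\pi$, and every other class is flat (each apex $c_j$ meets a valley $p_{j'}$ giving $\theta+(2\pi-\theta)=2\pi$, and the corner class $\{p_0,c_n,p_n\}$ gives $(\alpha+\beta)+(\pi-2\beta)+(\pi-\alpha+\beta)=2\pi$). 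Note also that the four $\pi$-vertices are these leg midpoints, not the parallelogram corners as in your unadorned-parallelogram model. You could carry out this gluing by hand instead of citing the theorem, but without the half-leg offset the verification you defer to the end cannot succeed.
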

\begin{proof}
  Akiyama and Matsunaga \cite{Isotetra} prove that a polygon $P$ can be folded into a tetramonohedron if the boundary of $P$ can be divided into six parts,
  two of which are parallel and the other four of which are rotationally symmetric.
  We divide the boundary of a spine polygon $P$ into $l_1=(p_0, c_1, \ldots, c_n)$; $l_2=(c_n, p_n)$, $l_3=(p_n, p_{n+1})$; $l_4=(p_{n+1}, c_{n+1}, \ldots,c_{2n})$, $l_5=(c_{2n}, p_{2n+1})$; and $l_6=(p_{2n+1}, p_0)$.
  Then $l_3$ and $l_6$ are parallel because the base of $P$ is a parallelogram.
  Each of $l_2$ and $l_5$ is trivially rotationally symmetric.
  Each of $l_1$ and $l_4$ is rotationally symmetric because each spike of $P$ is an isosceles triangle.
\end{proof}

Now we introduce some classes of polyhedra;
refer to \figurename~\ref{fig:prismatoid}.

A \emph{prismatoid} is the convex hull of parallel \emph{base} and \emph{top} convex polygons. We sometimes call the base and the top \emph{roofs} when they are not distinguished.
We call a prismatoid \emph{regular} if (1) its base $P_1$ and top $P_2$ are congruent regular polygons and (2) the line passing through the centers of $P_1$ and $P_2$ is perpendicular to $P_1$ and $P_2$.
(Note that the side faces of a regular prismatoid do not need to be regular polygons.)
The perpendicular distance between the planes containing $P_1$ and $P_2$ is the \emph{height} of the prismatoid.
The set of regular prismatoids contains \emph{prisms} and \emph{antiprisms},
as well as doubly covered regular polygons (prisms of height zero).

\begin{figure}
  \centering
  \includegraphics[width=8cm]{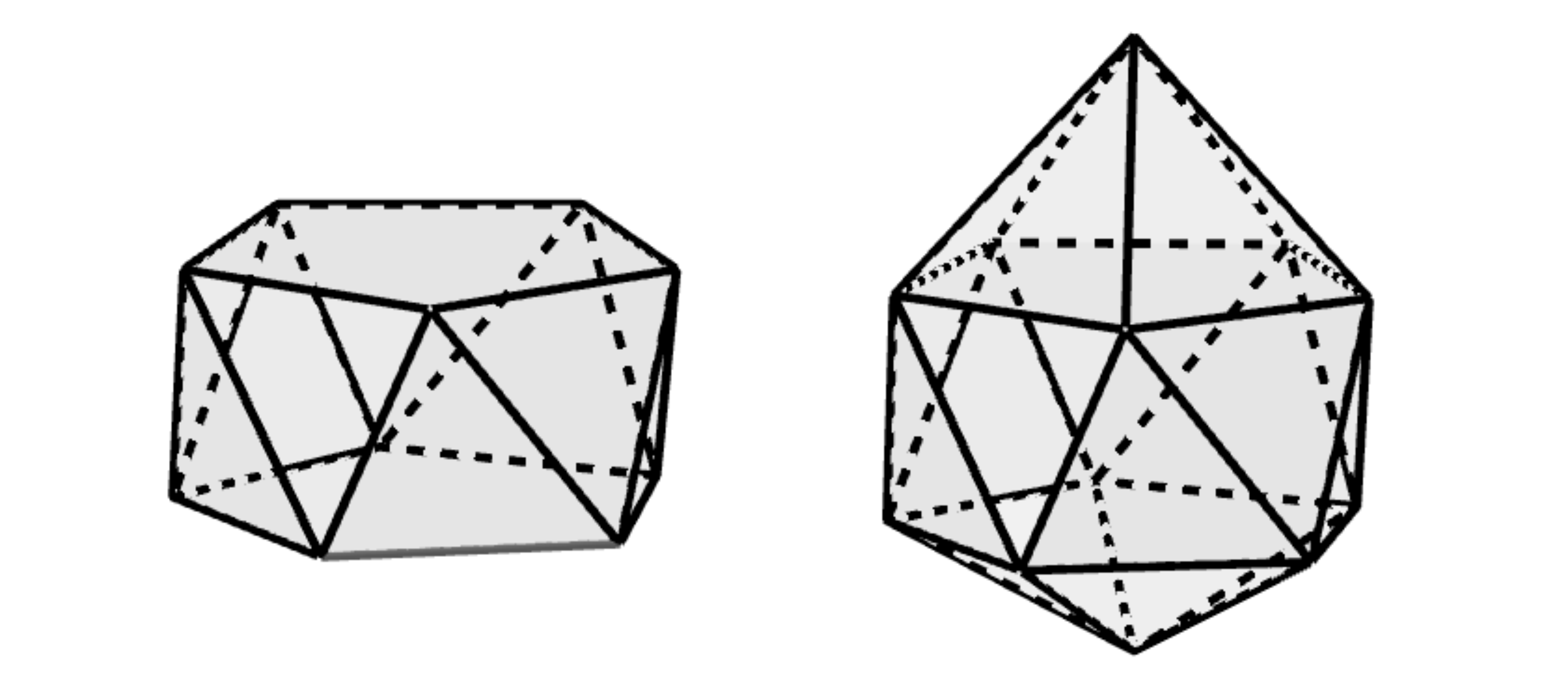}
  \caption{A regular prismatoid and an augmented regular prismatoid}
  \label{fig:prismatoid}
\end{figure}	

A \defn{pyramid} is the convex hull of a \emph{base} convex polygon and an \emph{apex} point.
We call a pyramid \emph{regular} if the base polygon is a regular polygon, and the line passing through the apex and the center of the base is perpendicular to the base.
(Note that the side faces of a regular pyramid do not need to be regular polygons.)
A polyhedron is an \emph{augmented regular prismatoid} if it can be obtained by attaching two regular pyramids 
to a regular prismatoid base-to-roof, where the bases of the pyramids are
congruent to the roofs of the prismatoid and each roof is covered by
the base of one of the pyramids.

\begin{theorem}
  \label{thm:prismoid}
  Any regular prismatoid or augmented regular prismatoid of positive volume can be unfolded to a spine polygon.
\end{theorem}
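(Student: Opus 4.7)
The plan is to obtain the spine polygon by cutting $Q$ along (a) one slant edge of its side band, and (b) either the $n$ center-to-vertex radii of each roof (non-augmented case) or all slant edges of each attached pyramid (augmented case). The first cut opens the side band into the parallelogram base $B$; the second cuts separate the roofs or pyramid faces into $n$ congruent isosceles triangles apiece, which become the spikes.

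I would first show that the side band unfolds flat into a parallelogram. By the $n$-fold rotational symmetry of $Q$, the side faces are either $n$ congruent rectangles (prism subcase) or $2n$ triangles that all share a common edge-length multiset $\{s,a,b\}$, where $s$ is the roof side length and $a,b$ are the two possible slant edge lengths (the lengths $|v_i w_i|$ and $|v_{i+1} w_i|$, which generically differ when the twist angle between the two roofs is neither $0$ nor $\pi/n$). In the non-prism subcase, the three side-face angles meeting at each bottom-roof vertex $v_i$ are precisely the three interior angles of a triangle with sides $\{s,a,b\}$ (each occurring at the vertex opposite a distinct side), so they sum to $\pi$ by the law of cosines; in the prism subcase, the two angles at $v_i$ are each $\pi/2$, also summing to $\pi$. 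The same identity holds at each top-roof vertex. Therefore, after cutting one slant edge, the bottom-roof vertices $v_0,\ldots,v_n$ become collinear, the top-roof vertices $w_0,\ldots,w_n$ become collinear on a parallel line, and the two copies of the cut slant edge are parallel, so the unfolded side band is a parallelogram $B$ of positive area whenever the prismatoid has positive height.

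Next, I would attach spikes by unfolding the roofs or pyramid faces outward. In the non-augmented case, cutting each roof $P_i$ along its $n$ center-to-vertex radii yields $n$ congruent isosceles triangles with base $s$ and apex angle $2\pi/n$; each such triangle unfolds outward along the length-$s$ segment of $B$ corresponding to its base edge. In the augmented case, cutting all $n$ slant edges of each attached pyramid separates the $n$ congruent isosceles pyramid faces, which similarly unfold outward along $B$. In either case, each of the two parallel length-$ns$ sides of $B$ receives $n$ spikes while the two cut-edge sides receive none, matching the spine polygon structure. Adjacent spikes share a base vertex where their base angles sum to $\pi - \alpha < \pi$ (with $\alpha > 0$ the spike's apex angle), so they cover disjoint regions.

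The main obstacle is the angle-sum step: showing that the side-face angles at each roof vertex sum to $\pi$ regardless of how the two regular $n$-gon roofs are rotated relative to each other. Once this is in place (essentially by observing that the three relevant angles are the three distinct interior angles of a single $(s,a,b)$-triangle and applying the law of cosines), the parallelogram structure of $B$ follows immediately from the $n$-fold symmetry, and the attachment of $2n$ congruent spikes is then a direct consequence of the regularity of the roofs or pyramids.
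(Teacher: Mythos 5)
Your proof is correct and follows essentially the same construction as the paper: cut each roof (or attached pyramid) radially from its center (or apex) to its vertices, plus one cut across the side band joining a vertex of $P_1$ to a vertex of $P_2$. The paper simply asserts that these cuts yield a spine polygon, whereas you additionally verify the key fact that the side-face angles at each roof vertex sum to $\pi$ (so the band unfolds to a parallelogram with the roof pieces as congruent isosceles spikes) --- a worthwhile detail the paper omits.
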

\begin{proof}
  Let $Q$ be a regular prismatoid. Let $c_1$ and $c_2$ be the center points of two roofs $P_1$ and $P_2$, respectively.
  Cutting from $c_i$ to all vertices of $P_i$ for each $i=1,2$ and
  cutting along a line joining between any pair of vertices of $P_1$ and $P_2$, we obtain a spine polygon.
  For an augmented regular prismatoid $Q$, we can similarly cut from the apex $c_i$ of each pyramid to the other vertices of the pyramid, which are the vertices of the roof $P_i$ of the prismatoid.
\end{proof}

When the height of the regular prismatoid is zero (or it is a doubly covered regular polygon),
the proof of Theorem~\ref{thm:prismoid} does not work because the resulting polygon is not connected.
In this case, we need to add some twist.

\begin{theorem} \label{thm:regular n-gon}
Any doubly covered regular $n$-gon is 1-step refoldable to a tetramonohedron for $n>2$.
\end{theorem}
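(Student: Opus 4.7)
The plan is to exhibit an unfolding of the doubly covered regular $n$-gon $Q$ that is a spine polygon, and then apply Lemma~\ref{lem:spine}. Let $Q$ have faces $P_1, P_2$, centers $c_1, c_2$, and shared vertices $v_0, \ldots, v_{n-1}$ with edges $e_i = v_i v_{(i+1) \bmod n}$. I would adapt the unfolding strategy of Theorem~\ref{thm:prismoid}: first, cut from each $c_i$ to every vertex $v_j$ so that each face breaks into $n$ congruent isosceles \emph{center-triangles}. In the positive-height case of Theorem~\ref{thm:prismoid}, the unfolding is completed by cutting along one segment joining a vertex of the top roof to the corresponding vertex of the bottom roof. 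For our $Q$ this segment has zero length, which is precisely why a direct application of Theorem~\ref{thm:prismoid} yields a disconnected unfolding. The \emph{twist} I propose is to replace this zero-length cut by a positive-length interior cut (for instance, a polyline from some vertex through $c_1$, or through both $c_1$ and $c_2$) so that the remaining hinges form a spanning tree and the pieces unfold to a single connected polygon.

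With the right choice of twist, the $2n$ center-triangles arrange themselves as a spine polygon: a parallelogram base of positive area in the middle, flanked by $n$ congruent isosceles spikes on each of two opposite sides, where one group of spikes consists of the center-triangles from $P_1$ and the other from $P_2$. The base parallelogram acquires positive area precisely because the twist provides a positive-length seam. Once the unfolding is shown to be a spine polygon, Lemma~\ref{lem:spine} yields a folding into a tetramonohedron.

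The main obstacle is verifying that the resulting planar figure is a valid spine polygon, namely that it has no self-overlap and that its boundary has the required parallelogram base with congruent spikes. A naive zigzag unfolding \emph{drifts}: at each shared vertex $v_i$ of $Q$ the curvature is $\kappa(v_i) = 4\pi/n > 0$, which appears as a $4\pi/n$ angular gap at $v_i$ in the plane, and if these gaps accumulate the base of the strip curves rather than remaining a straight-sided parallelogram. The twist must therefore be chosen so that these gaps are distributed symmetrically about the axis of the strip, so they sit between adjacent spikes without causing the base to veer off a line. Carrying this out carefully, likely with separate treatments depending on the parity of $n$, appears to be the crux of the proof.
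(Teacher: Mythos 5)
There is a genuine gap, and it is fatal to the construction as you describe it: area accounting. Your proposed unfolding consists of the $2n$ center-triangles, whose total area already equals the entire surface area of $Q$ (namely $2\,\mathrm{area}(P_1)$). If all $2n$ of these triangles are to serve as the spikes of a spine polygon, the parallelogram base --- which the definition explicitly requires to have \emph{positive} area --- must be covered by surface you no longer have. A cut, however long, has zero area, so no ``positive-length seam'' can supply the base; the twist can only change which hinges survive, it cannot create new surface. (Relatedly, the $n$ sectors of each face have apex angle $2\pi/n$ and already sum to $2\pi$ around each center, so the two fans cannot be re-laid as two straight rows of spikes along a common straight strip without the $4\pi/n$ gaps at the $v_i$ that you correctly identify; your proposal names this drift problem but does not resolve it.)

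The paper's fix is a genuinely different decomposition. For $n=2k$, view the doubly covered $2k$-gon as a \emph{degenerate regular prismatoid} whose two roofs are the regular $k$-gons on alternate vertices (one inscribed in $P_1$, one in $P_2$) and whose $2k$ ``side triangles'' lie flat in the planes of the two faces. Only the two inscribed $k$-gons are cut into sectors, giving $k$ spikes of apex angle $2\pi/k$ on each side (not $n$ spikes of angle $2\pi/n$), while the $2k$ outer triangles unroll into the positive-area parallelogram base; the connecting cut of Theorem~\ref{thm:prismoid} is now a genuine edge of the $2k$-gon, so connectivity comes for free. The doubly covered $n$-gon is recovered by folding this spine polygon along a zig-zag path, and Lemma~\ref{lem:spine} gives the tetramonohedron. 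For odd $n=2k+1$ no such halving of the vertices exists; the paper instead uses spikes of apex angle $4\pi/(2k+1)$, trims two triangles off the spine polygon, and the resulting unfolding is \emph{not} a spine polygon --- it folds to a tetramonohedron only after re-verifying the Akiyama--Matsunaga six-part boundary condition directly with a modified division. So the parity split you anticipate is real, but the odd case needs more than a symmetric placement of the angular gaps: it needs a different unfolding and a separate verification outside Lemma~\ref{lem:spine}.
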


\begin{proof}
  First suppose that $n$ is an even number $2k$ for some positive integer $k>1$.
  We consider a special spine polygon where the top angles are $\frac{2\pi}{k}$; the vertices $p_0,p_{2n+1},p_1$ are on a circle centered at $c_1$;
  and the vertices $p_{2n+1}, p_1, p_2$ are on a circle centered at $c_{2n}$; see \figurename~\ref{fig:octagon}.
  Then we can obtain a doubly covered $n$-gon by folding along the zig-zag path $p_{2n+1}, p_1, p_{2n}, p_2, \dots, p_{n+2}, p_n$ shown in \figurename~\ref{fig:octagon}.
  Thus when $n=2k$ for some positive integer $k$, we obtain the theorem.

\begin{figure}
  \centering
  \includegraphics[width=8cm]{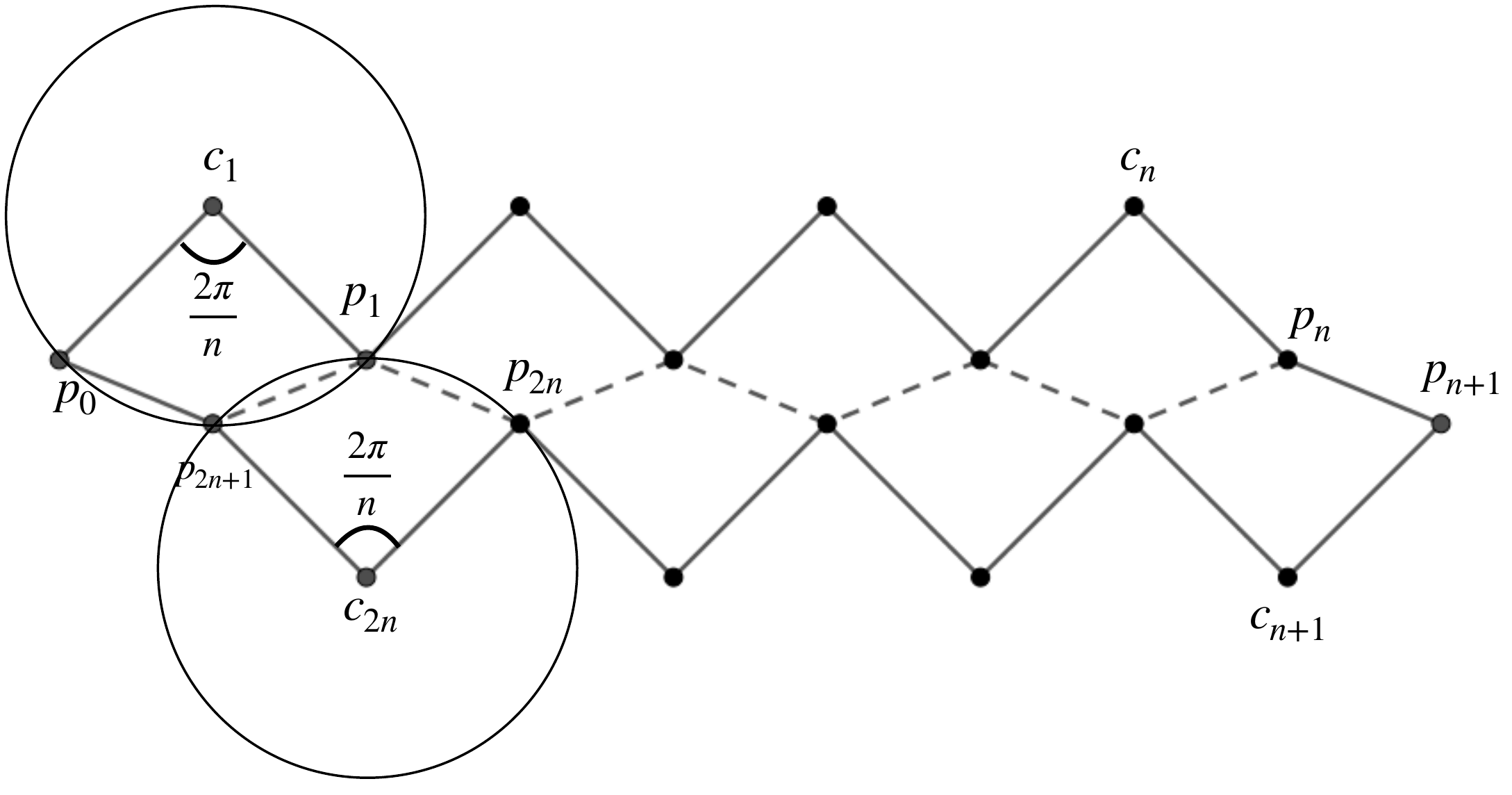}
  \caption{The case of a doubly covered regular $8$-gon}
  \label{fig:octagon}
\end{figure}	

  Now suppose that $n$ is an odd number $2k+1$ for some positive integer $k$.
  We consider the spine polygon whose top angles are $\frac{4\pi}{2k+1}$;
  the vertices $p_0,p_{2n+1},p_1$ are on a circle centered at $c_1$; and the vertices $p_{2n+1}, p_1, p_2$ are on a circle centered at $c_{2n}$.
  From this spine polygon, we cut off two triangles $c_1, p_0, c_{2n+1}$ and $c_{n+1}, p_{n+1}, p_{n}$, as in \figurename~\ref{fig:pentagon}.
  Then we can obtain a doubly covered $n$-gon by folding along the zig-zag path $p_{2n+1}, p_1, p_{2n}, p_2, \dots, p_{n+2}, p_n$ shown in \figurename~\ref{fig:pentagon}.
  Although the unfolding is no longer a spine polygon, it is easy to see that it can also fold into a tetramonohedron
  by letting $l_1'=(c_1, p_1, \ldots, p_n)$, $l_2'=(p_n, p_n)$, $l_3'=(p_n, c_{n+1})$, $l_4'=(c_{n+1}, p_{n+2} \ldots,p_{2n+1})$, $l_5'=(p_{2n+1}, p_{2n+1})$, and $l_6'=(p_{2n+1}, c_1)$
  in the proof of Lemma \ref{lem:spine}.
\end{proof}

\begin{figure}
  \centering
  \includegraphics[width=8cm]{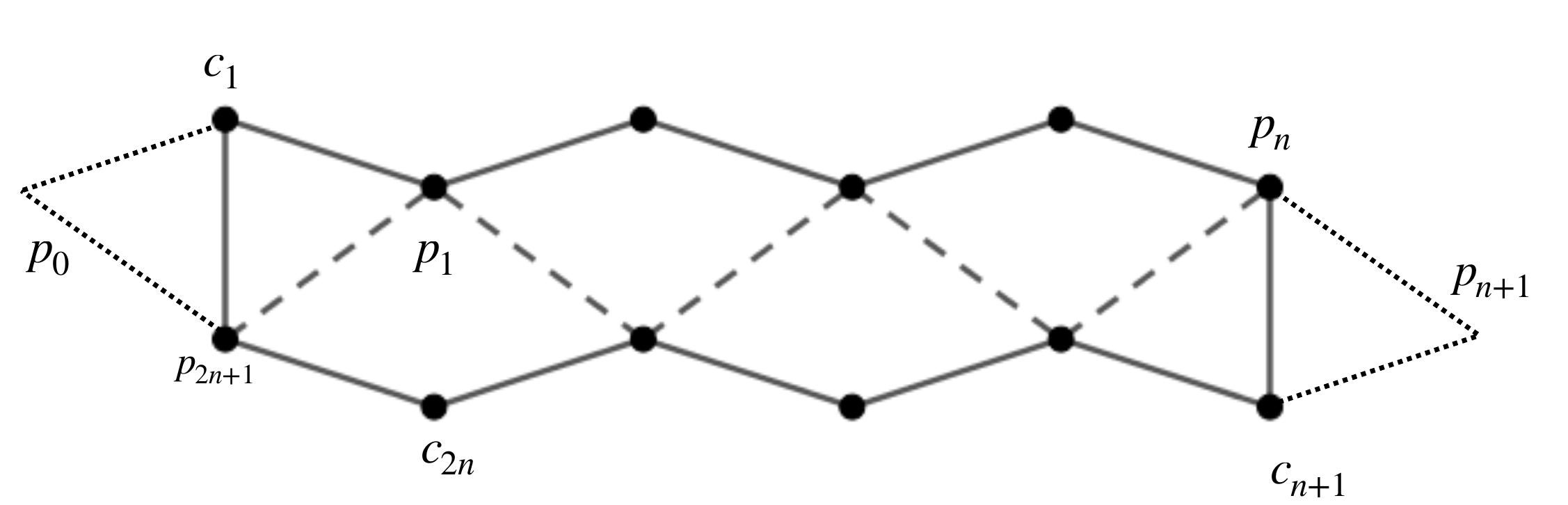}
  \caption{The case of a doubly covered regular $5$-gon}
  \label{fig:pentagon}
\end{figure}	

The proof of Theorem~\ref{thm:regular n-gon} is effectively exploiting that
a doubly covered regular $2k$-gon (with $k>1$) can be viewed as a degenerate regular prismatoid with two $k$-gon roofs,
where each of the side triangles of this prismatoid is on the plane of the roof sharing the base of the triangle.

Because the cube and the regular octahedron are regular prismatoids and the regular icosahedron is an augmented regular prismatoid, we obtain the following:
\begin{cor}
\label{cor:regular}
Let $Q$ and $Q'$ be regular polyhedra of the same area, neither of which is a regular dodecahedron.
Then there exists a refolding sequence of length at most 3 from $Q$ to $Q'$.
When one of $Q$ or $Q'$ is a regular tetrahedron, the length of the sequence is at most 2.
\end{cor}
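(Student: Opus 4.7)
The plan is to route every refolding through tetramonohedra, leveraging the machinery of Sections~\ref{sec:tetramonohedra} and~\ref{sec:prismoid}. First I would record three one-step refoldings to tetramonohedra. The regular tetrahedron is itself a tetramonohedron, since its four equilateral faces are congruent acute triangles. The cube and the regular octahedron are regular prismatoids (the cube as a square prism over a square base, the octahedron as a triangular antiprism over two parallel equilateral triangles), and the regular icosahedron is an augmented regular prismatoid (a pentagonal antiprism capped by two congruent pentagonal pyramids). Therefore, by Theorem~\ref{thm:prismoid} followed by Lemma~\ref{lem:spine}, each of the cube, the regular octahedron, and the regular icosahedron has a 1-step refolding into a tetramonohedron.

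With these building blocks in hand, I would split by cases on the pair $\{Q, Q'\}$. If neither of $Q, Q'$ is a tetrahedron, pick the above 1-step refoldings $Q \to T_Q$ and $Q' \to T_{Q'}$ into tetramonohedra. Since unfolding and folding preserve surface area, $T_Q$ and $T_{Q'}$ have the common area of $Q$ and $Q'$, so Theorem~\ref{thm:tetramono} supplies a 1-step refolding $T_Q \to T_{Q'}$. Concatenating them (and using that refoldability is symmetric, because a common unfolding of two polyhedra is a common unfolding regardless of direction) yields the 3-step sequence $Q \to T_Q \to T_{Q'} \to Q'$. If one of $Q, Q'$, say $Q$, is a regular tetrahedron, then $Q$ is already a tetramonohedron. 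If $Q'$ is also a tetrahedron the sequence has length~$0$; otherwise $Q'$ has a 1-step refolding to a tetramonohedron $T_{Q'}$ with the same area as $Q$, and Theorem~\ref{thm:tetramono} then gives a 1-step refolding $Q \to T_{Q'}$, producing the 2-step sequence $Q \to T_{Q'} \to Q'$.

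There is no substantive obstacle here; the proof is a bookkeeping argument assembling earlier results. The only items worth explicitly verifying are the symmetry of refoldability (immediate from the common-unfolding definition) and that the cube, regular octahedron, and regular icosahedron really do fit the prismatoid classes invoked above. Both checks are direct, so the corollary follows.
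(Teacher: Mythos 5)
Your argument is correct and is essentially identical to the paper's: the corollary is derived there exactly by observing that the cube and regular octahedron are regular prismatoids and the regular icosahedron is an augmented regular prismatoid (so each 1-step refolds to a tetramonohedron via Theorem~\ref{thm:prismoid} and Lemma~\ref{lem:spine}), that the regular tetrahedron already is a tetramonohedron, and then linking the intermediate tetramonohedra by Theorem~\ref{thm:tetramono}. No discrepancies to report.
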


\begin{figure}[thb]
  \centering
  \includegraphics[width=8cm]{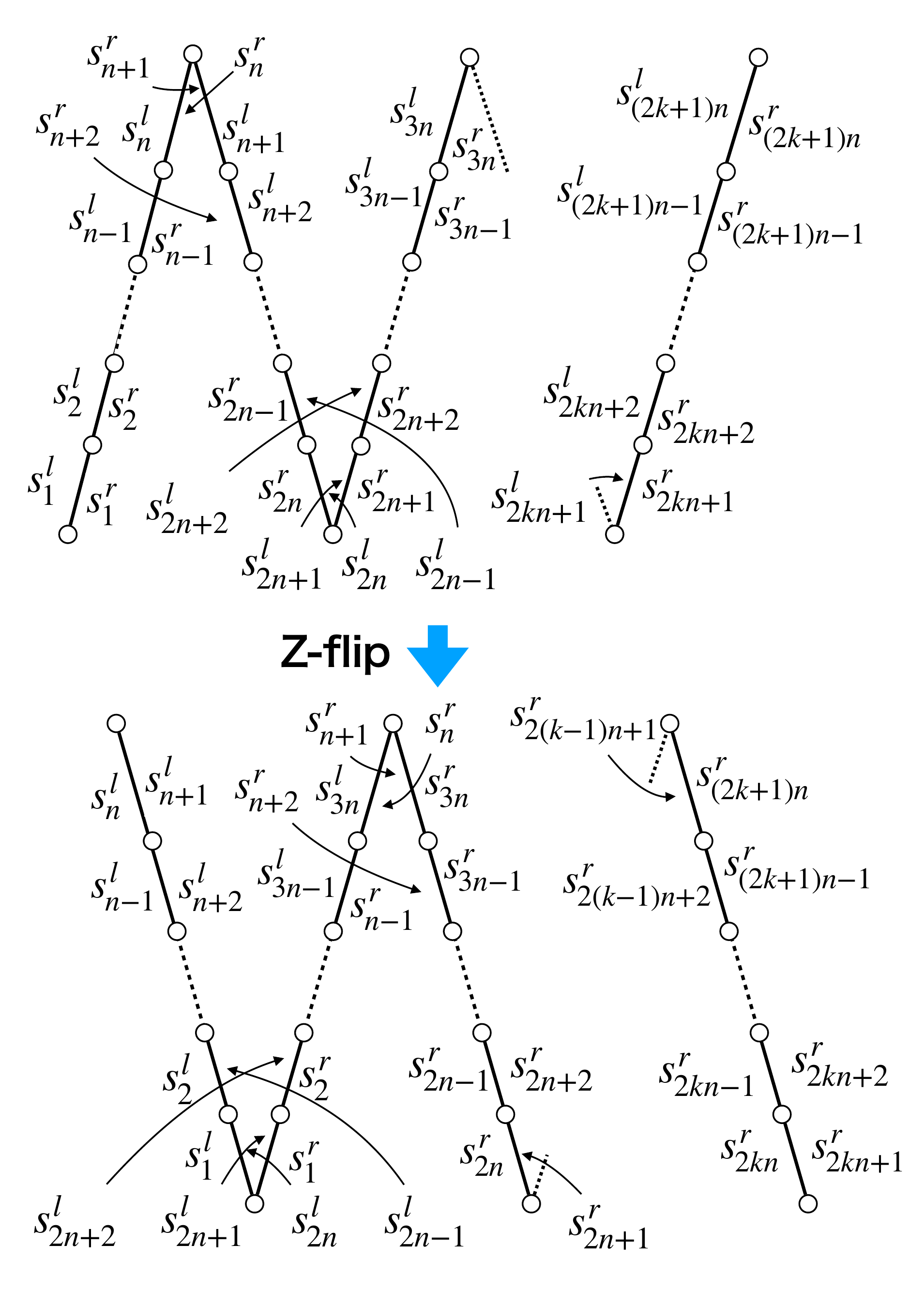}
  \caption{Z-flip}
  \label{fig:Z-flip}
\end{figure}

\section{Refoldability of a Regular Dodecahedron to a Tetramonohedron}
\label{sec:dodecahedron}

In this section, we show that there is a refolding sequence of the regular dodecahedron to a tetramonohedron of length 4.
Combining this result with Corollary~\ref{cor:regular},
we obtain refolding sequences between any two regular polyhedra of length at most 6.

Demaine et al.~\cite{Refold} mention that the regular dodecahedron can be refolded to another convex polyhedron.
Indeed, they show that any convex polyhedron can be refolded to at least one other convex polyhedron
using an idea called ``flipping a Z-shape''. We extend this idea.

\begin{df}\rm
  For a convex polyhedron $Q$ and $n, k\in \mathbb{N}$,
  let $p=(s_1,s_2, \ldots, s_{(2k+1)n})$ be a path that consists of isometric and non-intersecting $(2k+1)n$ straight line segments $s_i$ on $Q$.
  We cut the surface of $Q$ along $p$.
  Then each line segment is divided into two line segments on the boundary of the cut.
  For each line segment $s_i$, let $s^l_i$ and $s^r_i$ correspond to the left and right sides on the boundary along the cut (\figurename~\ref{fig:Z-flip}).
  Then $p$ is a \emph{Z-flippable $(n, k)$-path} on $Q$, and $Q$ is \emph{Z-flippable} by $p$, if the following gluing satisfies Alexandrov's conditions.
  \begin{itemize}
    \item Glue $s_1^l, s_2^l, \ldots, s_n^l$ to $s_{2n}^l, s_{2n-1}^l, \ldots, s_{n+1}^l$.
    \item Glue $s_1^r, s_2^r, \ldots, s_n^r$ to $s_{2n+1}^l, s_{2n+2}^l, \ldots, s_{3n}^l$.
    \item Glue $s_{n+1}^r, s_{n+2}^r, \ldots, s_{2n}^r$ to $s_{3n}^r, s_{3n-1}^r, \ldots, s_{2n+1}^r$.
    $$\vdots$$
    \item \raggedright Glue $s_{2(k-1)n+1}^r, s_{2(k-1)n+2}^r, \ldots, s_{2kn}^r$ to $s_{(2k+1)n}^r, s_{(2k+1)n-1}^r, \ldots, s_{2kn+1}^r$.
  \end{itemize}

  If there are Z-flippable paths $p^1, p^2, \ldots, p^m$ inducing a forest on $Q$, we can flip them all at the same time.
  Then we say that $Q$ is \defn{Z-flippable} by $p^1, p^2, \ldots, p^m$.
\end{df}

\begin{theorem}
  \label{th:dodeca}
  There exists a 4-step refolding sequence between a regular dodecahedron and a tetramonohedron.
\end{theorem}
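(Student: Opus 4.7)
The plan is to build an explicit refolding sequence $Q_0, Q_1, Q_2, Q_3, Q_4$, where $Q_0$ is the regular dodecahedron and $Q_4$ is a tetramonohedron (equivalently, $Q_4 \in \Pi_4$ by the characterization noted in Section~\ref{sec:tetramonohedra}). Each transition $Q_i \to Q_{i+1}$ will be realized by cutting along a forest of Z-flippable paths on $Q_i$ and re-gluing using the Z-gluing pattern introduced above; by construction and Alexandrov's theorem, the result is again a convex polyhedron. Throughout, total curvature is fixed at $4\pi$ by Gauss--Bonnet; since the dodecahedron distributes this as $20$ vertices of curvature $\pi/5$ while any tetramonohedron has $4$ smooth vertices of cocurvature $\pi$, the flips must collectively merge the $20$ vertex cones into $4$ groups of five.

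First, I would exploit the icosahedral symmetry of the dodecahedron to identify a rotationally symmetric forest of short (i.e.\ $n=1$, $k=1$) Z-flippable paths, each crossing several pentagonal faces, so that performing all their flips simultaneously yields a first intermediate polyhedron $Q_1$ in a simpler class (for instance, a polyhedron whose nonzero curvatures are still multiples of $\pi/5$ but concentrated at a smaller set of vertices). I would then apply further forests of Z-flippable paths on $Q_1, Q_2, Q_3$, each chosen so that the count of non-smooth vertices strictly decreases and so that their cocurvatures drift toward $\pi$, culminating in $Q_4 \in \Pi_4$. Each step would be accompanied by a figure showing the cut on the current polyhedron and the gluing that produces the next, and Alexandrov's conditions would be verified by local angle bookkeeping at each identified point.

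The main obstacle is the very first Z-flip on the dodecahedron: the pentagonal face structure does not naturally accommodate long isometric straight segments of the kind required, so the first path must be placed by carefully developing neighboring pentagons across their shared edges and checking non-self-intersection and isometry of the segments there. A secondary obstacle is fine-tuning the last one or two flips so that the remaining non-smooth vertices end up with cocurvature exactly $\pi$ (hence smooth), which is what forces $Q_4 \in \Pi_4$ rather than some $\Pi_k$ with $k<4$. Once an intermediate $Q_i$ has vertex curvatures that are integer multiples of $\pi/5$, designing subsequent flips to collapse five such vertices into one is a bookkeeping exercise, and the final tetramonohedron emerges directly from the characterization of $\Pi_4$.
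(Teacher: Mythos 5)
Your proposal correctly identifies the paper's strategy---iterated Z-flips, curvature bookkeeping in units of $\pi/5$, and the goal of merging the twenty curvature-$\frac{\pi}{5}$ cone points into four smooth vertices of cocurvature $\pi$ so that the final polyhedron lands in $\Pi_4$---but it is a plan rather than a proof. The entire content of the theorem is the existence of concrete Z-flippable forests on $D$, $Q_1$, $Q_2$, $Q_3$: paths whose segments are isometric, non-crossing straight segments on the surface, and whose prescribed regluings satisfy Alexandrov's conditions at every glued point. You defer exactly this ("I would exploit the icosahedral symmetry to identify\ldots", "each chosen so that\ldots", "would be accompanied by a figure") and even flag the first flip on the dodecahedron as an unresolved obstacle. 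Nothing in the proposal exhibits a single such path, verifies a single gluing, or establishes that four steps suffice rather than five or six; the number $4$ appears only as a target, not as a consequence of anything shown.

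The claim that once curvatures are multiples of $\pi/5$ the remaining flips are ``a bookkeeping exercise'' is the step that would fail. A Z-flip is not free to merge arbitrary collections of cone points: the segments of the path must have equal lengths, the path must avoid self-intersection when developed across faces, and the cocurvature sums at every identified point (not just at the cone points being merged) must stay at most $2\pi$. This is why the paper's construction tracks the exact geodesic distances ($1$ and $\phi$) between cone points on each intermediate polyhedron and uses a mix of $(2,1)$-, $(1,1)$-, and $(1,3)$-paths rather than only the short $(1,1)$-flips you propose for the first step. Whether a valid forest of flips exists at each stage is a genuinely geometric question that your curvature arithmetic does not answer; to close the gap you would need to specify the paths explicitly (or give a general existence argument for them, which the paper does not have either---it relies on explicit verified figures).
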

 
\begin{proof}
  Let $D$ be a regular dodecahedron. To simplify, we assume that each edge of a regular pentagon is of length 1.
  We show that there exists a refolding sequence $D, Q_1, Q_2, Q_3, Q_4$ of length 4 for a tetramonohedron $Q_4$.

  All cocurvatures of the vertices of $D$ are equal to $\frac{9\pi}{5}$.
  For any vertices $v$, there are 3 vertices of distance 1 from $v$
  and 6 vertices of distance $\phi=\frac{1+\sqrt{5}}{2}$ from $v$.
  Hereafter, in figures, each circle describes a non-flat vertex on a polyhedron and
  the number in the circle describes its cocurvature divided by $\frac{\pi}{5}$.
  Each pair of vertices of distance 1 is connected by a solid line,
  and each pair of vertices of distance $\phi$ is connected by a dotted line.
  \figurename~\ref{fig:dodeca} shows the initial state of $D$ in this notation.
  We note that solid and dotted lines do not necessarily imply edges (or crease lines) on the polyhedron.

   \begin{figure}[htb]
    \centering
    \includegraphics[width=7cm]{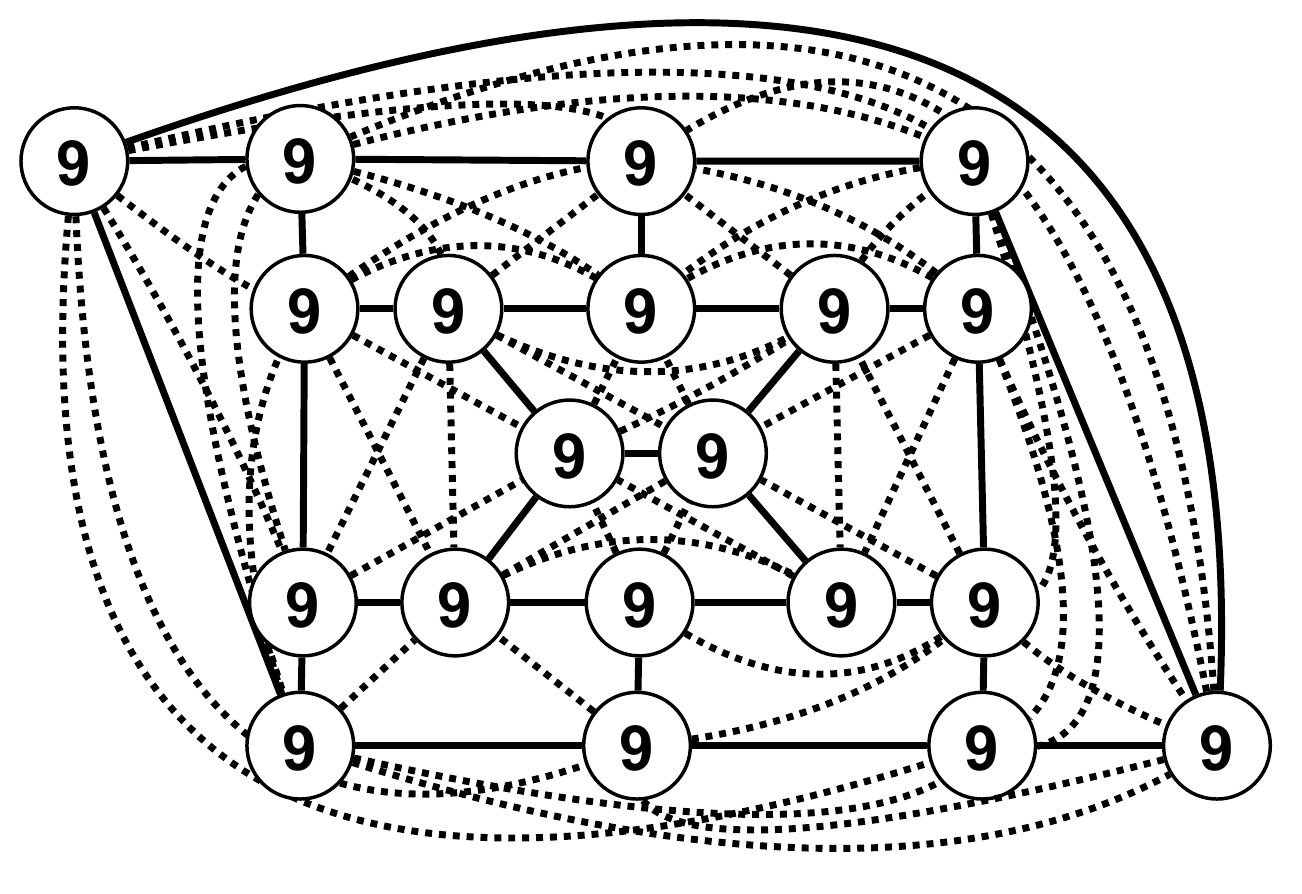}
    \caption{The initial regular dodecahedron}
    \label{fig:dodeca}
  \end{figure}

  \begin{figure}[thb]
    \centering
    \includegraphics[width=0.8\textwidth]{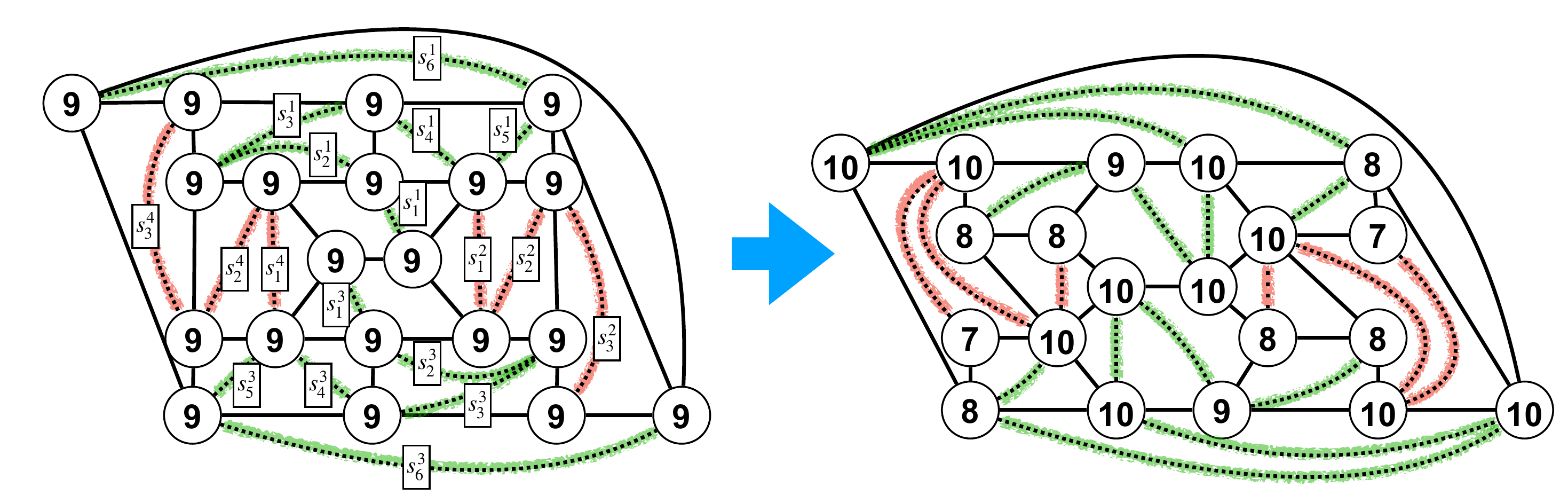}
    \caption{A refolding from $D$ to $Q_1$}
    \label{fig:Step1}
  \end{figure}

 
  First, we choose $p^1=(s^1_1,s^1_2, \ldots, s^1_6)$, $p^2=(s^2_1,s^2_2, s^2_3)$,
  $p^3=(s^3_1,s^3_2, \ldots, s^3_6)$, and $p^4=(s^4_1,s^4_2, s^4_3)$ on the surface of $D$ on the left of \figurename~\ref{fig:Step1}.
  Then, $p^1$ and $p^3$ are Z-flippable $(2, 1)$-paths and $p^2$ and $p^4$ are Z-flippable $(1, 1)$-paths.
  Thus, $D$ is Z-flippable by $p^1, p^2, p^3, p^4$ to the polyhedron on the right of \figurename~\ref{fig:Step1}.
  Let $Q_1$ be the resulting polyhedron. 
  
  \begin{figure}[thb]
    \centering
    \includegraphics[width=0.8\textwidth]{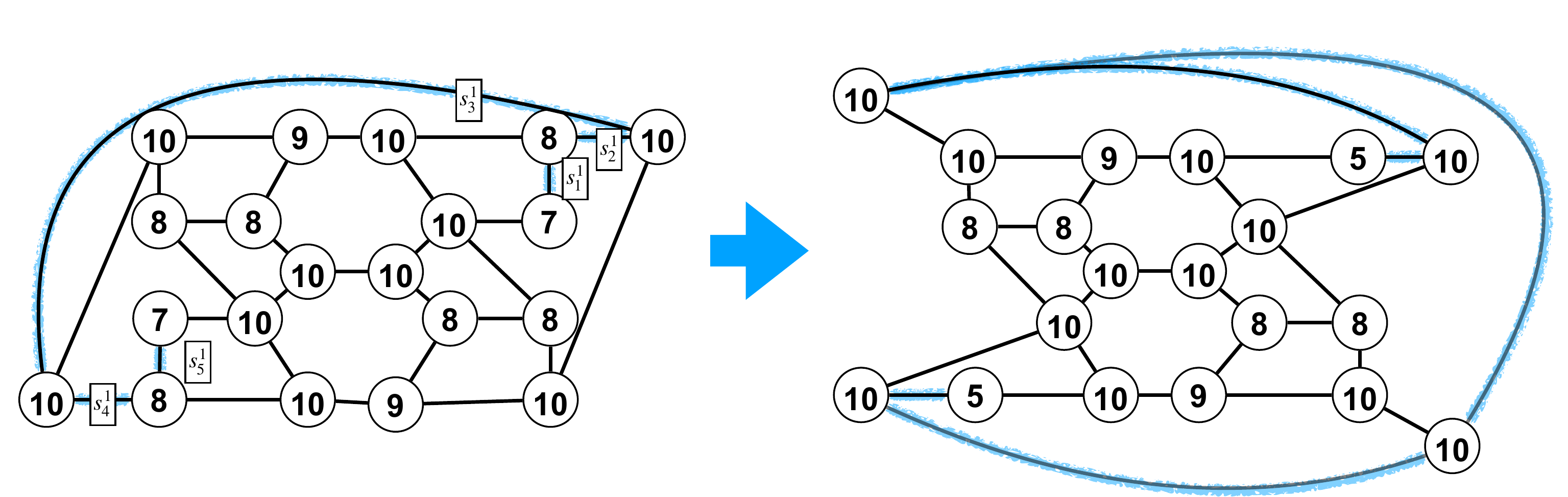}
    \caption{A refolding from $Q_1$ to $Q_2$}
    \label{fig:Step2}
  \end{figure}
  

  Second, we choose $p^1=(s^1_1,s^1_2, \ldots,s^1_5)$ on the surface of $Q_1$ on the left of \figurename~\ref{fig:Step2}.
  Then, $p^1$ is a Z-flippable $(1, 3)$-path. Thus, $Q_1$ is Z-flippable by $p^1$ to the next polyhedron $Q_2$ on the right of \figurename~\ref{fig:Step2}.
  
  \begin{figure}[thb]
    \centering
    \includegraphics[width=0.8\textwidth]{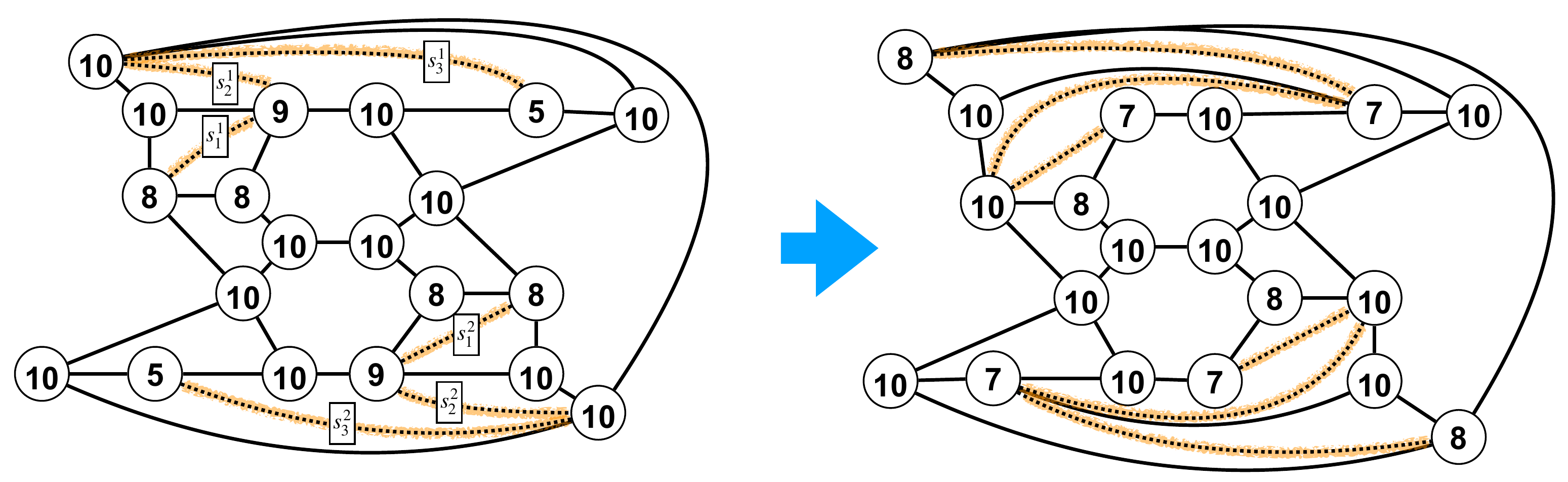}
    \caption{A refolding from $Q_2$ to $Q_3$}
    \label{fig:Step3}
  \end{figure}


  Third, we choose $p^1=(s^1_1,s^1_2,s^1_3)$ and $p^2=(s^2_1,s^2_2,s^2_3)$ on the surface of $Q_2$ on the left of \figurename~\ref{fig:Step3}.
  Then, $p^1$ and $p^2$ are Z-flippable $(1, 1)$-paths. Thus, $Q_2$ is Z-flippable by $p^1$ and $p^2$ to the polyhedron $Q_3$ on the right of \figurename~\ref{fig:Step3}.

  \begin{figure}[thb]
    \centering
    \includegraphics[width=0.8\textwidth]{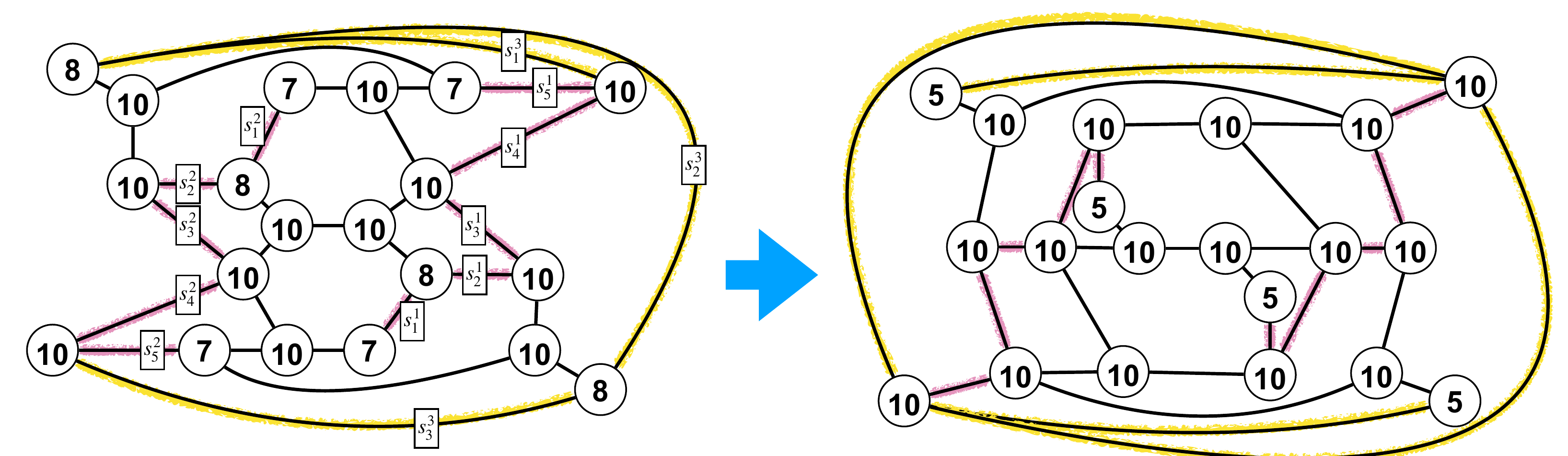}
    \caption{A refolding from $Q_3$ to $Q_4$}
    \label{fig:Step4}
  \end{figure}


  Fourth, we choose $p^1=(s^1_1,s^1_2, \ldots, s^1_5)$, $p^2=(s^2_1,s^2_2,\ldots,s^2_5)$, and $p^3=(s^3_1,s^3_2,s^3_3)$
  on the surface of $Q_3$ on the left of \figurename~\ref{fig:Step4}.
  Then, $p^1$ and $p^2$ are Z-flippable $(1, 3)$-paths and $p^3$ is a Z-flippable $(1, 1)$-path.
  Thus, $Q_3$ is Z-flippable by $p^1$, $p^2$, and $p^3$ to the polyhedron $Q_4$ on the right of \figurename~\ref{fig:Step4}.
  Finally, we obtain a tetramonohedron $Q_4$ from a regular dodecahedron $D$ by a 4-step refolding sequence.

In this proof, we used partial unfolding between pairs of polyhedra in the refolding sequence.
We give the (fully unfolded) common unfoldings in Appendix~\ref{appendix}.
Thus, there exists a 4-step refolding sequence between a regular dodecahedron and a tetramonohedron.
\end{proof}	

\section{Refoldability of a Tetrahedron to a Tetramonohedron}
\label{sec:tetrahedron}
In this section, we prove that any tetrahedron can be refolded to a tetramonohedron.
Let $\mathcal{Q}_k$ denote the class of polyhedra with exactly $k$ vertices.

\subsection{Refoldability of $\Pi_3$ to $\Pi_4$}
First we show a technical lemma: any polyhedron $Q$ in $\Pi_3$ can be refolded to a tetramonohedron by
a refolding sequence of length linear in the number of vertices of~$Q$.
\begin{lemma}
  \label{lem:Pi_3}
  For any $Q\in \Pi_3\cap \mathcal{Q}_n$ with $n \geq 5$,
  there is a refolding sequence of length
  $2n-9$ from $Q$ to some $Q' \in \Pi_4$.
\end{lemma}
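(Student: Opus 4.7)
The plan is to induct on $n$, reducing the vertex count by one at the cost of two refolding steps until reaching the base case $n=5$, which will be handled by a single refolding. Since $\Pi_3 \cap \mathcal{Q}_4 = \emptyset$ (three smooth vertices already contribute total curvature $3\pi$, so the fourth would need curvature $\pi$ and hence be smooth), the relevant base is indeed $\mathcal{Q}_5$. Counting $1 + 2(n - 5) = 2n - 9$ reproduces the claimed bound.

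\textbf{Base case, $n=5$.} Here $Q$ has three smooth vertices $w_1, w_2, w_3$ and two non-smooth vertices $u, v$ whose curvatures satisfy $\kappa(u) + \kappa(v) = \pi$ by Gauss--Bonnet. I would take a shortest geodesic $\gamma$ on the surface of $Q$ connecting $u$ and $v$, cut $Q$ along $\gamma$, and reglue the two resulting boundary copies of $\gamma$ after a length-$|\gamma|$ shift, so that $u$ is identified with $v$. The fused vertex has curvature $\kappa(u) + \kappa(v) = \pi$, hence is smooth, while $w_1, w_2, w_3$ are untouched, yielding a polyhedron in $\Pi_4 \cap \mathcal{Q}_4$. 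Verification of Alexandrov's conditions reduces to checking that $\gamma$ can be chosen to avoid the smooth vertices and that the shifted reglue leaves each glued point with cocurvature at most $2\pi$.

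\textbf{Inductive step, $n\ge 6$.} Assuming the lemma for $n-1$, it suffices to give a two-step refolding sequence $Q \to Q_1 \to Q_2$ with $Q_2 \in \Pi_3 \cap \mathcal{Q}_{n-1}$, after which the induction hypothesis contributes $2(n-1)-9 = 2n-11$ more steps. Because $n - 3 \geq 3$, the curvatures of the non-smooth vertices, summing to $\pi$, include some pair $u, v$ with $\kappa(u) + \kappa(v) < \pi$. The goal of the two-step sequence is to fuse $u$ and $v$ into a single non-smooth vertex of curvature $\kappa(u) + \kappa(v)$, leaving the three smooth vertices intact. The reason one step is generally insufficient is that the shortest geodesic from $u$ to $v$ on $Q$ may pass through another vertex or produce a gluing that violates Alexandrov's second condition at some interior point of $\gamma$. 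I would therefore use the first refolding as a ``preparation'' step, applied away from $u$ and $v$, that reshapes $Q$ into a $Q_1 \in \Pi_3 \cap \mathcal{Q}_n$ in which a clean geodesic between the (images of the) two chosen vertices exists; then the second refolding executes the merge as in the base case but without forcing the fused vertex to be smooth.

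\textbf{Main obstacle.} The delicate part is ensuring that the preparatory refolding exists universally: one must show that for every $Q \in \Pi_3 \cap \mathcal{Q}_n$ with $n \geq 6$ and every admissible pair $u, v$, either the direct one-step merge already works, or an auxiliary cut-and-reglue (for instance, a Z-flip along a short path near $u$) repositions the surface so that a legal geodesic from $u$ to $v$ appears, all while keeping the three smooth vertices and the remaining non-smooth vertices in place. Verifying Alexandrov's conditions simultaneously at the fused vertex and along the internal glued arc, and ruling out degenerate configurations where the merged polyhedron collapses or loses convexity, is where the bulk of the technical work lies. Once this two-step reduction is in hand, the count $2n-9$ follows immediately from the induction.
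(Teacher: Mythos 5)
Your induction scheme and step count ($1 + 2(n-5) = 2n-9$) match the paper's, but the core construction in your base case is wrong, and the inductive step leaves the essential idea unsupplied. In the base case you cut a geodesic $\gamma$ from $u$ to $v$ and reglue with a shift identifying $u$ with $v$, claiming the fused vertex has curvature $\kappa(u)+\kappa(v)=\pi$. It is the \emph{cocurvatures} that add under such a gluing, not the curvatures: the identified point has total angle $\sigma(u)+\sigma(v) = 4\pi - (\kappa(u)+\kappa(v)) = 3\pi > 2\pi$, violating Alexandrov's second condition (equivalently, the merged vertex would have curvature $-\pi$ and the result cannot be convex). This shift-and-glue merge is legitimate only when $\kappa(u)+\kappa(v)\geq 2\pi$ --- which is exactly the regime the paper exploits in Theorem~\ref{thm:tetra} when it picks the two vertices of \emph{smallest} cocurvature of a tetrahedron --- and it is never available in $\Pi_3$, where the non-smooth vertices have total curvature only $\pi$.

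The missing idea, which is the actual content of the paper's proof, is that curvature must be transferred between vertices by \emph{splitting angles at auxiliary points} rather than by identifying two cone points. In the base case the paper cuts along $\lambda_1\lambda_2$, $v_iv_j$, and a third segment $v_jm$ to a point $m$ chosen on $\lambda_1\lambda_2$ so that $\angle v_iv_jm = \kappa(v_i)$; this splits the angle at $v_j$ into pieces of size $\kappa(v_i)$ and $\pi$ (using $\sigma(v_j)-\kappa(v_i)=\pi$), and the regluing folds the boundary at midpoints $c_1,c_2,c_3$, each of which becomes a new smooth vertex, yielding a tetramonohedron in one step. In the inductive step the paper's first refolding is not a vague ``preparation'': it is a rolling-belt rotation along the slit $\lambda_1\lambda_2$ that relocates a smooth vertex to a point $m$ with $\angle v_iv_jm=\pi$, and the second refolding cuts four specific segments and reglues so that $v_j$'s curvature is absorbed into $v_i$ (the surviving vertex has cocurvature $\sigma(v_i)-\kappa(v_j)$) while three fresh smooth vertices appear at fold midpoints to restore membership in $\Pi_3$. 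Your proposal explicitly defers this mechanism to ``the bulk of the technical work,'' so as written it does not constitute a proof.
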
	
\begin{proof}(Outline)
We prove the claim by induction.
As the base case, suppose $n=5$; refer to \figurename~\ref{fig:Pi_3toPi_4-1}.
Let $\lambda_1,\lambda_2,\lambda_3$ be the smooth vertices of $Q$ and $v_i,v_j$ be the other vertices.
Take a point $m$ on the segment $\lambda_1 \lambda_2$ with $\angle v_iv_jm = \kappa(v_i)$ and
cut the surface of $Q$ along the segments $\lambda_1 \lambda_2$, $v_i v_j$, and $v_j m$.
Then $\sigma (v_j) - \kappa (v_i) = \pi$ because $\kappa (v_i)+\kappa (v_j) = \pi$.
The point $v_j$ is then divided into a point of degree $\kappa(v_i)$ and a point of degree $\pi$ on the boundary.
Trace the obtained boundary from $v_i$ counterclockwise and denote points corresponding to $v_i, v_j, m, m$ by $p_i, p_{j}, p_{m_1}, p_{m_2}$, respectively.
Let $c_1$ and $c_2$ be the center points of the segments $p_i p_{j}$ and $p_{m_1} p_{m_2}$, respectively.
We take the point $s$ which has the same distance with $p_{m_1}$ from $c_1$.
Let $c_3$ be the center of the segment $p_{m_2} s$.
Now glue the segment $s c_1$ to $c_1 p_{m_1}$, the segment $p_{m_2} c_2$ to $c_2 p_{m_1}$, and the segment $p_{m_2} c_3$ to $c_3,s$.
Let $Q'$ be the obtained polyhedron after the gluing.
Then $Q'$ is in $\Pi_4$ because each curvature of every vertex of $Q'$ is $\pi$.

\begin{figure}
  \centering
  \includegraphics[width=7cm]{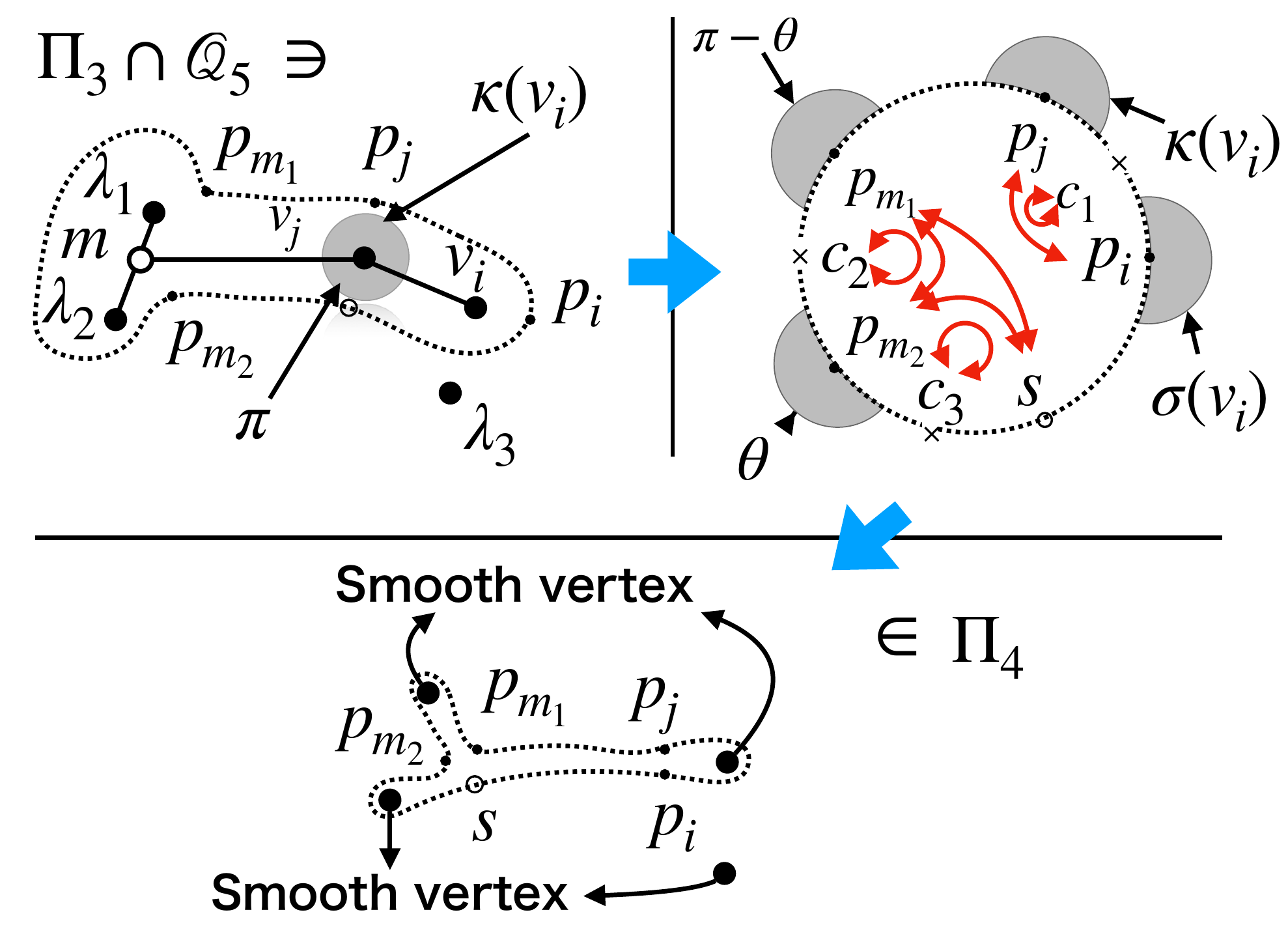}
  \caption{A refolding of a polyhedron in $\Pi_3\cap \mathcal{Q}_5$ to $\Pi_4$}
  \label{fig:Pi_3toPi_4-1}
\end{figure}

Now we turn to the inductive step.
Let $Q$ be any polyhedron in $\Pi_3\cap \mathcal{Q}_{k}$ with $k>5$.
We prove that there exists a refolding sequence $(Q,Q',Q'')$
for two polyhedra $Q'$ and $Q''$ with $Q''\in \Pi_3\cap \mathcal{Q}_{k-1}$;
refer to \figurename~\ref{fig:Pi_3toPi_4-2}.

\begin{figure}
  \centering
  \includegraphics[width=7cm]{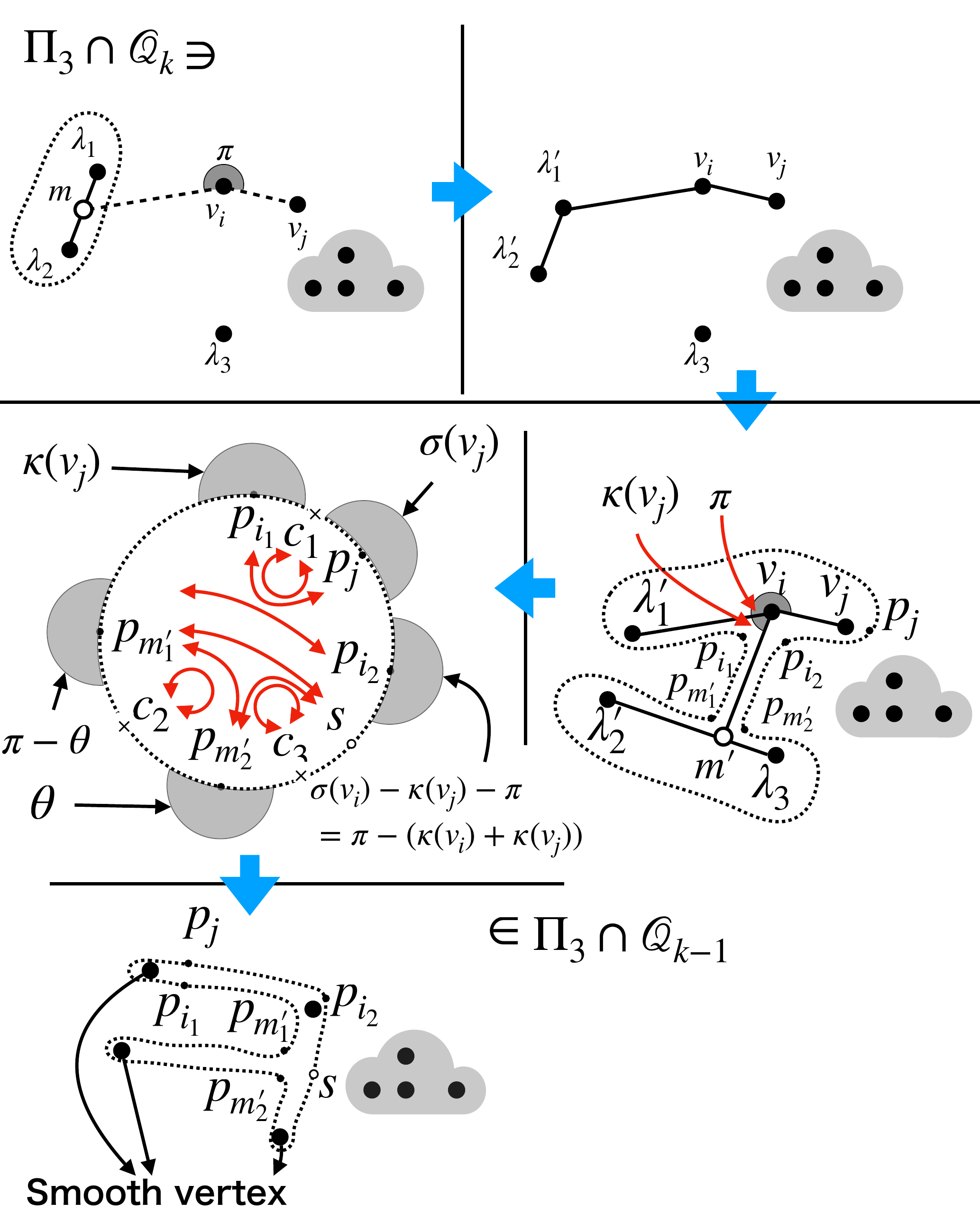}
  \caption{A refolding sequence from $\Pi_3\cap \mathcal{Q}_k$ to $\Pi_3\cap \mathcal{Q}_{k-1}$}
  \label{fig:Pi_3toPi_4-2}
\end{figure}

Let $\lambda_1,\lambda_2,\lambda_3$ be the smooth vertices of $Q$ and $v_i,v_j$ be any other vertices on $Q$.
Note that $0 < \kappa (v_i) + \kappa (v_j) < \pi$ because $k>5$.
Take a point $m$ on the segment $\lambda_1 \lambda_2$ with $\angle v_iv_jm = \pi$, cut the surface of $Q'$ along the segment $\lambda_1 \lambda_2$,
and glue it again so that $m$ is an endpoint. (This can be done because the cut produces a ``rolling belt'' in terms of folding; see \cite{GFA} for the details.)
Let $Q'$ be the obtained polyhedron, and let $\lambda'_1(=m),\lambda'_2,\lambda'_3(=\lambda_3)$ be the smooth vertices of $Q'$.
Now take a point $m'$ on the segment $\lambda'_2,\lambda'_3$ such that $\angle \lambda'_1 v_i m' = \kappa(v_j)$ and cut the surface of $Q'$ along
the segments $\lambda'_1 v_i$, $\lambda'_2 \lambda'_3$, $v_i v_j$, and $v_i m'$.
Trace the obtained boundary from $v_j$ counterclockwise and denote points corresponding to $v_j, v_i, m', m', v_i$ by $p_j, p_{i_1}, p_{m'_1}, p_{m'_2}, p_{i_2}$, respectively.
Let $c_1$ and $c_2$ be the midpoints of $p_{i_1} p_j$ and $p_{m'_1} p_{m'_2}$, respectively.
Take the point $s$ which has the same distance with $p_{m'_1}$ from $c_1$.
Let $c_3$ be the midpoint of the segment $p_{m'_2} s$.
Now glue the segment $p_{m'_1} c_1$ to $c_1 s$, the segment $p_{m'_2} c_2$ to $c_2 p_{m'_1}$, and the segment  $p_{m'_2} c_3$ to $c_3 s$.
Let $Q''$ be the obtained polyhedron.
Then $Q''$ is in $\Pi_3 \cap \mathcal{Q}_{k-1}$ because each curvature of points corresponding to $c_1,c_2,c_3$ is $\pi$ and
$0 < \sigma (v'_4) < \pi$ because $\sigma (v'_4) = \sigma (v_i) - \kappa(v_j) = \sigma (v_i) + \sigma(v_j) - 2\pi$ and $0 < \kappa (v_i) + \kappa (v_j) < \pi$.
\end{proof}

\begin{figure}
  \centering
  \includegraphics[width=7cm]{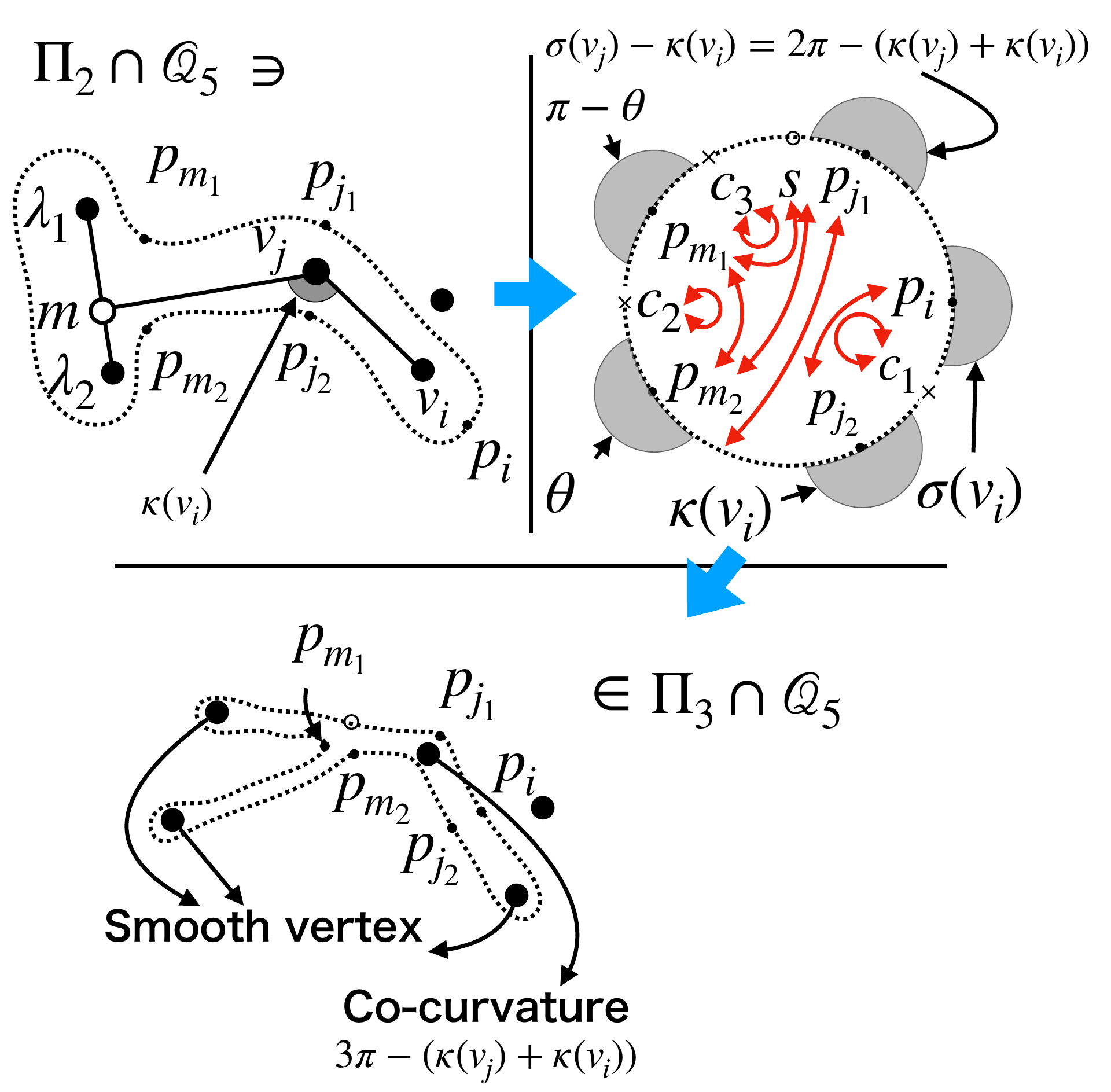}
  \caption{Refolding from any tetrahedron to a tetramonohedron}
  \label{fig:tetra}
\end{figure}

\begin{theorem}
  \label{thm:tetra}
  For any $Q \in \mathcal{Q}_4$, there is a 3-step refolding sequence from $Q$ to some $Q''' \in \Pi_4$.
\end{theorem}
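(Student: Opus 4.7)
The plan is to defer the last step to Lemma~\ref{lem:Pi_3}: its base case ($n=5$) converts any polyhedron in $\Pi_3 \cap \mathcal{Q}_5$ to a tetramonohedron in a single refolding. Thus it suffices to exhibit a $2$-step refolding sequence from an arbitrary tetrahedron $Q \in \mathcal{Q}_4$ to some $Q_2 \in \Pi_3 \cap \mathcal{Q}_5$; one more application of Lemma~\ref{lem:Pi_3} then produces the desired $Q''' \in \Pi_4$, giving a total of three steps.

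Let the four curvatures of $Q$ be $\kappa_1, \kappa_2, \kappa_3, \kappa_4$, summing to $4\pi$. For the first refolding $Q \to Q_1$, I would imitate the construction used in Lemma~\ref{lem:Pi_3}: pick two vertices $v_i, v_j$ of $Q$, cut along the geodesic joining them together with a carefully positioned auxiliary segment terminating at an interior point $m$, and re-glue in the rolling-belt / Z-flip style so that $m$ becomes a smooth vertex. The goal is a polyhedron $Q_1 \in \Pi_1 \cap \mathcal{Q}_5$, in which one curvature equals $\pi$ while the remaining three sum to $3\pi$. For the second refolding $Q_1 \to Q_2$, the newly created smooth vertex supports a rolling belt whose endpoint can slide freely; using this flexibility, as in the inductive step of Lemma~\ref{lem:Pi_3}, cut through the smooth vertex to a new interior point $m'$ chosen so that the re-gluing pins two additional vertices to curvature $\pi$. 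The resulting $Q_2$ then lies in $\Pi_3 \cap \mathcal{Q}_5$, and its two remaining non-smooth curvatures automatically sum to $\pi$ because the total curvature is $4\pi$, exactly matching the hypothesis of the base case of Lemma~\ref{lem:Pi_3}.

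The main obstacle is carrying out steps one and two for every tetrahedron. The auxiliary points $m$ and $m'$ must be positioned so that the cut paths stay on the surface without self-intersection, Alexandrov's conditions hold after each re-gluing (in particular, the cocurvature at every glued point must stay at most $2\pi$), and the resulting polyhedra really lie in the claimed classes $\Pi_i \cap \mathcal{Q}_k$ with no incidental coincidences that collapse vertices. Because $\kappa_1, \kappa_2, \kappa_3, \kappa_4$ can be distributed arbitrarily in $(0, 2\pi)$ subject to summing to $4\pi$, some case analysis---based on which pairs of curvatures sum to at most $\pi$, and on whether a single very obtuse vertex dominates and forces a different choice of the initial cut segment---is likely needed in order to choose the cut pattern consistently across all tetrahedra.
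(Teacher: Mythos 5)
Your overall skeleton matches the paper's: reach some polyhedron in $\Pi_3\cap\mathcal{Q}_5$ in two refoldings and then invoke the base case of Lemma~\ref{lem:Pi_3} (which for $n=5$ costs exactly one step) for the third. But the two refoldings that do the real work are only described as goals (``carefully positioned auxiliary segment'', ``chosen so that the re-gluing pins two additional vertices to curvature $\pi$''), not as constructions, and you yourself flag that making them work for every tetrahedron is the remaining obstacle. That obstacle is the content of the theorem, so as written there is a genuine gap. There is also an internal inconsistency in your intermediate target: a polyhedron in $\Pi_1\cap\mathcal{Q}_5$ has one smooth vertex and \emph{four} others whose curvatures sum to $3\pi$, not three.

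The paper closes the gap with a different, case-free first step. Choose the two vertices $v,v'$ of $Q$ with smallest cocurvature; since the four cocurvatures sum to $4\pi$, automatically $\sigma(v)+\sigma(v')\le 2\pi$. Cut the surface along the segment $vv'$ and reglue the slit onto itself by identifying $v$ with $v'$: the merged endpoint has cocurvature $\sigma(v)+\sigma(v')\le 2\pi$, and the two copies of the slit's midpoint become two smooth vertices. No auxiliary point $m$ and no case analysis on the curvature distribution is needed, and the result is already in $\Pi_2\cap\mathcal{Q}_5$ --- one smooth vertex ahead of your intended $\Pi_1$ intermediate. Having two smooth vertices $\lambda_1,\lambda_2$ is then exactly what makes the second step go through: cutting $\lambda_1\lambda_2$ opens a rolling belt, and the explicit cut pattern along $\lambda_1\lambda_2$, $v_iv_j$, and $v_jm$ (with $m$ placed so that $\angle(v_i,v_j,m)=\kappa(v_i)$, and with the midpoints $c_1,c_2,c_3$ becoming three smooth vertices under the stated gluing) lands in $\Pi_3\cap\mathcal{Q}_5$, the last cocurvature being $3\pi-(\kappa(v_i)+\kappa(v_j))\in[\pi,\frac{5\pi}{3}]$ so that Alexandrov's conditions are verified rather than hoped for. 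To salvage your version you would need to supply an equally explicit gluing for each of your two steps and check the cocurvature bounds; the paper's choice of a $\Pi_2$ rather than $\Pi_1$ intermediate is precisely what removes the case analysis you anticipate.
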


\begin{proof}(Outline)
  Let $v,v'$ be two vertices of $Q$ with smallest cocurvature.
  (That is, $\sigma(v),\sigma(v')\le \sigma(v'')$ for the other two vertices $v''$ of $Q$.)
  We cut along the segment $v v'$ and glue the point $v$ to $v'$. Let $Q'$ be the resulting polyhedron.
  Then, because $\sigma(v) + \sigma(v') \leq 2\pi$ by the Gauss--Bonnet Theorem, $Q'$ satisfies the Alexandrov's conditions.
  That is, $Q'$ is a convex polyhedron in $\Pi_2 \cup \mathcal{Q}_5$.
  (We assume that the original $Q$ has no smooth vertex to simplify the arguments.)

  Let $\lambda_1$ and $\lambda_2$ be the two smooth vertices of $Q'$ (which was generated by the gluing of $v$ and $v'$),
  and $v_i,v_j$ be two vertices of $Q'$ of larger cocurvature than others with $\kappa(v_i)<\kappa(v_j)$.
  By the Gauss--Bonnet Theorem, $\frac{4\pi}{3} \leq \sigma(v_i) + \sigma(v_j) \leq 2\pi$.
  We take the point $m$ on the segment $\lambda_1 \lambda_2$ so that $\angle(v_i,v_j,m) = \kappa(v_i)$.
  Now we cut along the segments $\lambda_1 \lambda_2$, $v_i v_j$, and $v_j m$; see \figurename~\ref{fig:tetra}.
  Trace the obtained boundary from $v_i$ counterclockwise and denote points corresponding to $v_i, v_j, m, m, v_j$ by
  $p_i, p_{j_1}, p_{m_1}, p_{j_2}, p_{m_2}$, respectively.
  Let $c_1$ and $c_2$ be the midpoints of the segments $p_i p_{j_2}$ and $p_{m_1} p_{m_2}$, respectively.
  Furthermore, we take the point $s$ which has the same distance with $p_{m_2}$ from $c_1$,
  and let $c_3$ be the midpoint of $s p_{m_1}$.

  Now glue segment $p_i c_1$ to $p_{j_2} c_1$,
  segment $p_i s$ to $p_{j_2} p_{m_2}$,
  segment $p_{m_1} c_2$ to $p_{m_2} c_2$, and
  segment $s c_3$ to $p_{m_1} c_3$.
  Let $Q''$ be the resulting polyhedron. Then the gluing to fold $Q''$ produces four vertices.
  Among them, three vertices produced by the points $c_1, c_2, c_3$ are smooth vertices of curvature $\pi$.
  The cocurvature of the vertex of $Q''$ generated by the gluing of $p_{j_1}$ to the boundary is
  $3\pi-(\kappa(v_j)+\kappa(v_i))$, which is in $[\pi, \frac{5\pi}{3}]$ by
  $\frac{4\pi}{3} \leq \sigma(v_i) + \sigma(v_j) \leq 2\pi$.
  Therefore, $Q''$ satisfies Alexandrov's conditions, and hence we obtain $Q'' \in \Pi_3 \cup \mathcal{Q}$.
  By Lemma~\ref{lem:Pi_3}, there exists $Q''' \in \Pi_4$ such that there is a 3-step refolding sequence from $Q$ to $Q'''$.
\end{proof}

\section{Conclusion}

In this paper, we give a partial answer to Open Problem 25.6 in \cite{GFA}.
For every pair of regular polyhedra, we obtain a refolding sequence of length at most 6.
Although this is the first refolding result for the regular dodecahedron, the number of refolding steps to other regular polyhedra seems a bit large.
Finding a shorter refolding sequence than Theorem~\ref{th:dodeca} is an open problem.

The notion of refolding sequence raises many open problems.
What pairs of convex polyhedra are connected by a refolding sequence of finite length?
Is there any pair of convex polyhedra that are not connected by any refolding sequence?

At the center of our results is that the set of tetramonohedra
induces a clique by the binary relation of refoldability.
Is the regular dodecahedron refoldable to a tetramonohedron?
Are all Archimedean and Johnson solids refoldable to tetramonohedra?
Is there any convex polyhedron not refoldable to a tetramonohedron?
(If not, we would obtain a 3-step refolding sequence between any pair of
convex polyhedra.)

Another open problem is the extent to which allowing or forbidding overlap
in the common unfoldings affects refoldability.
While we have defined refoldability to allow overlap,
in particular to follow \cite{Refold} where it may be necessary,
most of the results in this paper would still apply if we forbade overlap.
For example, Appendix~\ref{appendix} confirms this for our refolding sequence
from the regular dodecahedron to a tetramonohedron;
while the general approach of Lemma~\ref{lem:Pi_3}
is likely harder to generalize.
Are there two polyhedra that have a common unfolding
but all such common unfoldings overlap?
(If not, the two notions of refolding are equivalent.)

\section*{Acknowledgments}

This work was initiated during MIT class 6.849: Geometric Folding Algorithms,
Fall 2020.

Erik Demaine was partially supported by the Cornell Center for Materials
Research with funding from the NSF MRSEC program (DMR-1120296).
Hanyu Zhang was primarily supported by the Cornell Center for Materials Research
with funding from the NSF MRSEC program (DMR-1719875).
Ryuhei Uehara was partially supported by MEXT/JSPS Kakenhi Grant JP17H06287 and JP18H04091.



\nocite{U2020}
\small
\bibliographystyle{plain}
\bibliography{main}

\appendix
\section{Common Unfoldings from Regular Dodecahedron to Tetramonohedron in Theorem \ref{th:dodeca}}
\label{appendix}

Figures~\ref{fig:Q0toQ1}, \ref{fig:Q1toQ2}, \ref{fig:Q2toQ3}, and \ref{fig:Q3toQ4}
show the common unfoldings of each consecutive pair of polyhedra in
the refolding sequence from the proof of Theorem~\ref{th:dodeca}.

 \begin{figure*}[thb]
   \centering
   \includegraphics[width=0.8\textwidth]{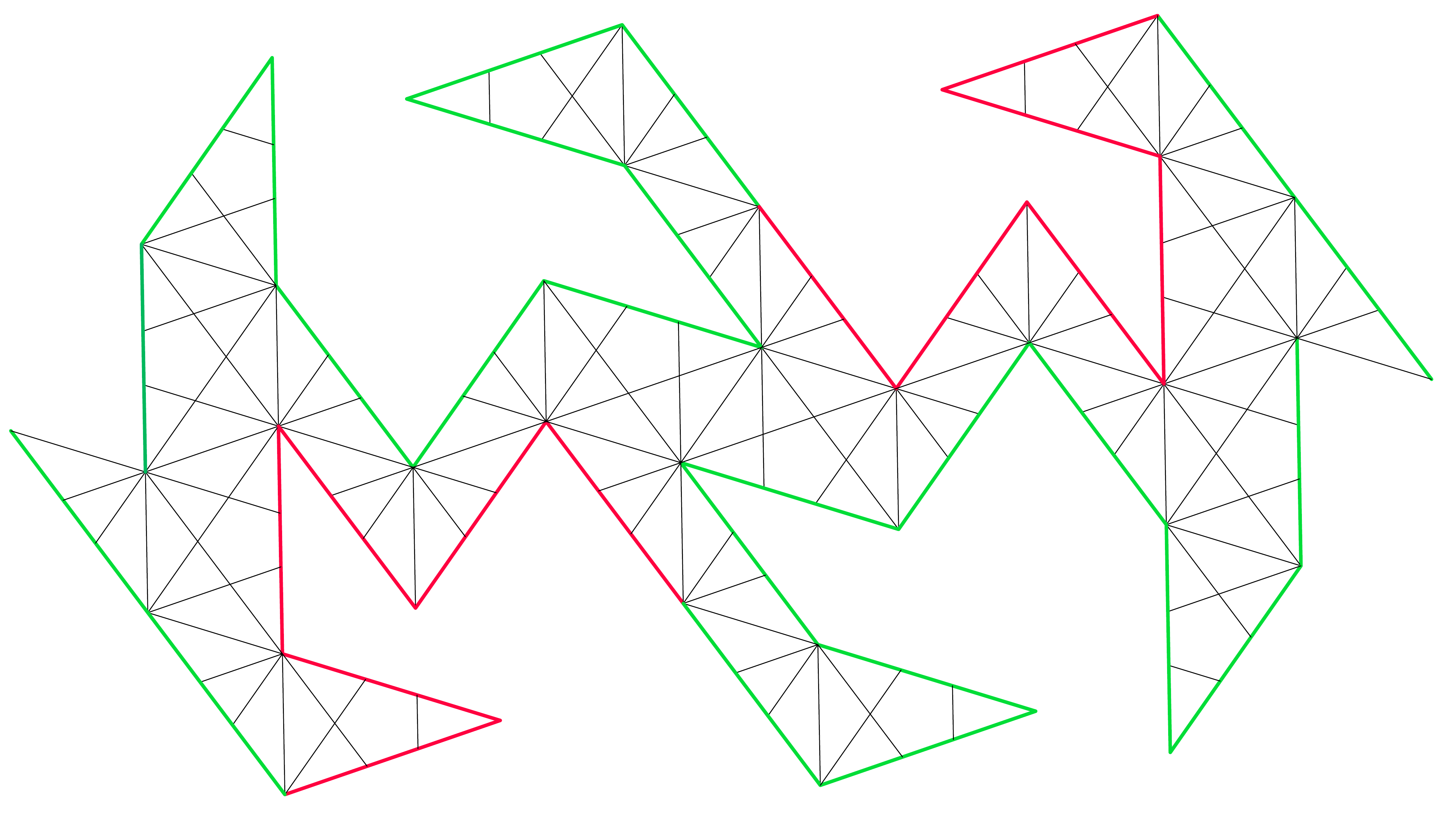}
   \caption{A common unfolding of $D$ and $Q_1$}
   \label{fig:Q0toQ1}
 \end{figure*}

 \begin{figure*}[thb]
   \centering
   \includegraphics[width=0.8\textwidth]{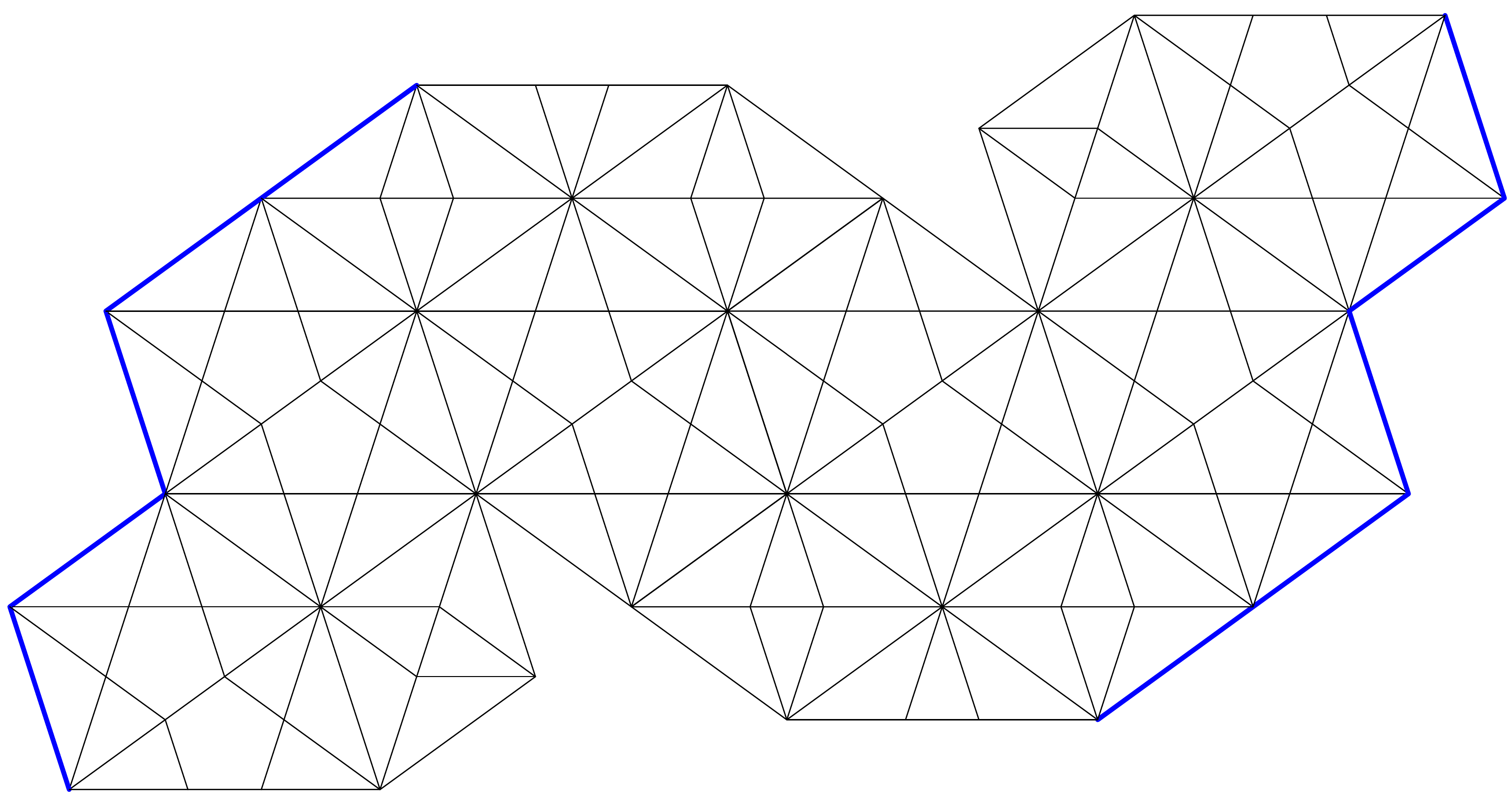}
   \caption{A common unfolding of $Q_1$ and $Q_2$}
   \label{fig:Q1toQ2}
 \end{figure*}

 \begin{figure*}[thb]
   \centering
   \includegraphics[width=0.8\textwidth]{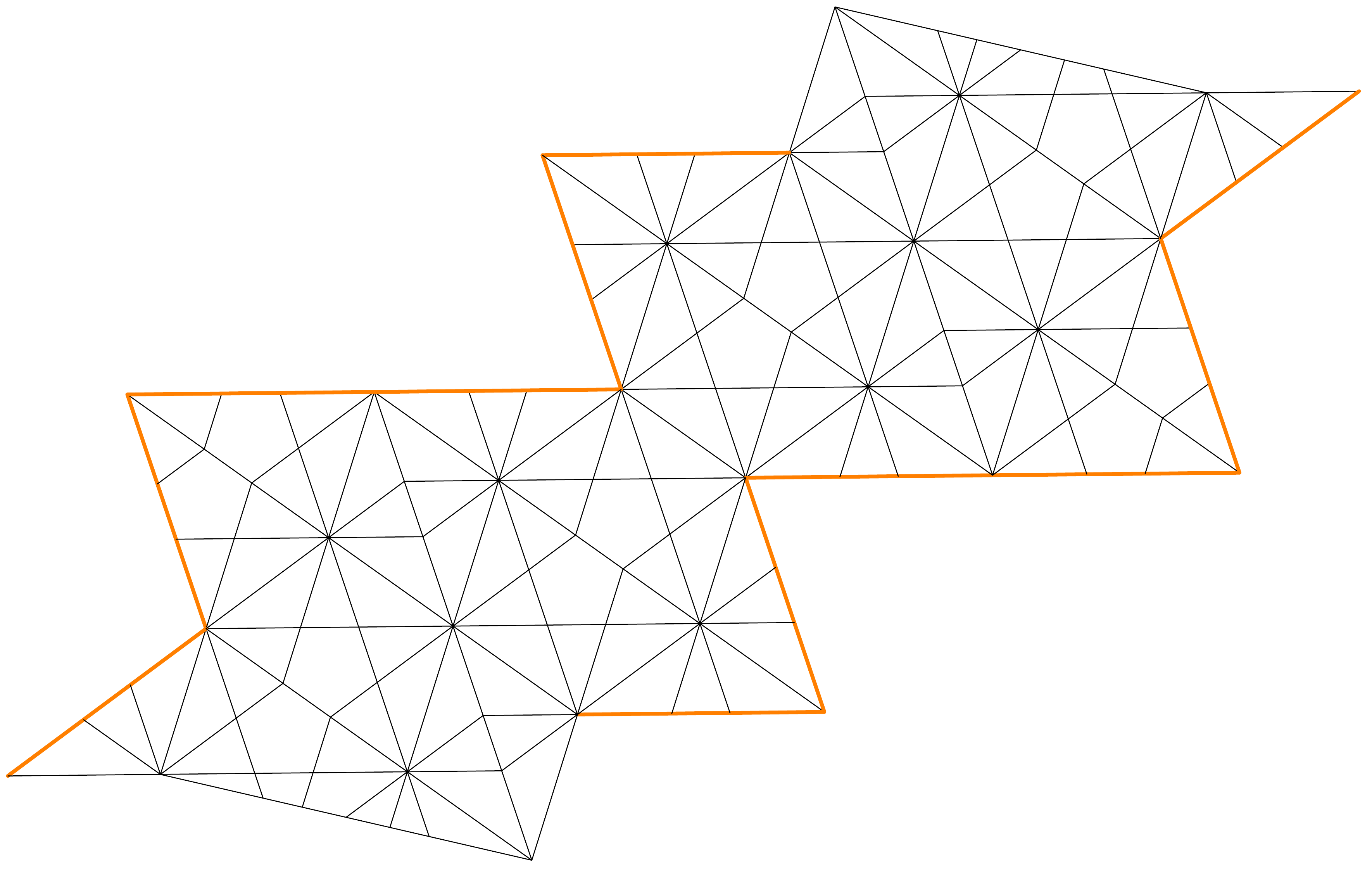}
   \caption{A common unfolding of $Q_2$ and $Q_3$}
   \label{fig:Q2toQ3}
 \end{figure*}

 \begin{figure*}[thb]
   \centering
   \includegraphics[width=0.8\textwidth]{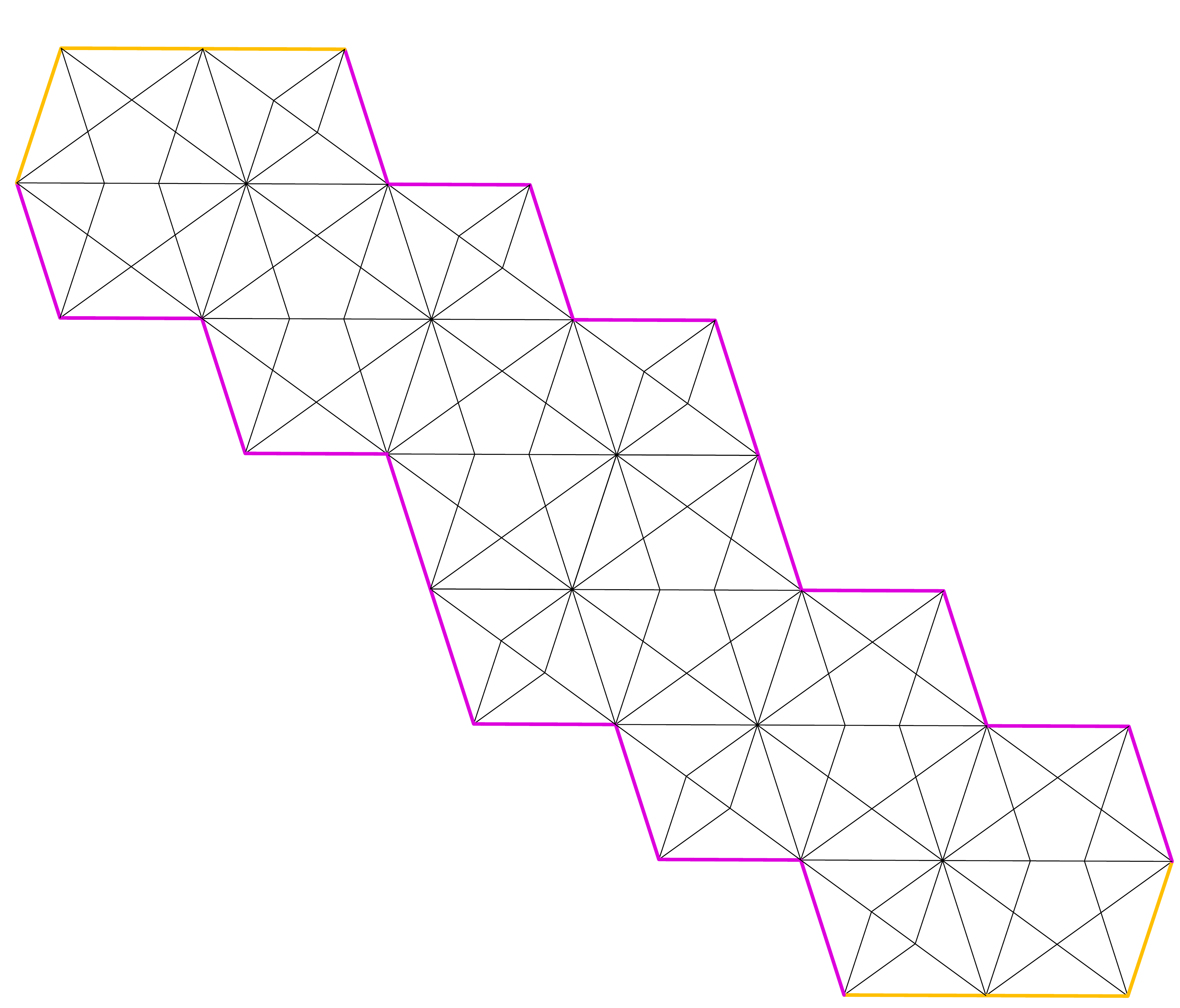}
   \caption{A common unfolding of $Q_3$ and $Q_4$}
   \label{fig:Q3toQ4}
 \end{figure*}

\end{document}